\documentclass{article}

    \usepackage{amsthm,amsmath,amssymb,amsfonts, mathtools,ascmac,subcaption,comment,ulem,geometry,authblk,xcolor,caption,bbm,bm}
\usepackage{graphicx}

\pdfoutput=1

    \numberwithin{equation}{section}
    \theoremstyle{definition}
    
    \newtheorem{dfn}{Definition}[section]
    \newtheorem{prop}[dfn]{Proposition}
    \newtheorem{lem}[dfn]{Lemma}
    \newtheorem{thm}[dfn]{Theorem}
    \newtheorem{cor}[dfn]{Corollary}
    \newtheorem{rem}{Remark}

    \newcommand{\ve}{\varepsilon}

    \newcommand{\argmin}{\mathop{\rm arg~min}\limits}

\title{Estimating the Shannon Entropy Using the Pitman--Yor Process}
\author[1]{Takato Hashino}
\author[2]{Koji Tsukuda}
\affil[1]{Joint Graduate School of Mathematics for Innovation, Kyushu University}
\affil[2]{Faculty of Mathematics, Kyushu University.}
\date{}
\begin{document}
\maketitle

\begin{abstract}
    The Shannon entropy is a fundamental measure for quantifying diversity and model complexity in fields such as information theory, ecology, and genetics.
    However, many existing studies assume that the number of species is known, an assumption that is often unrealistic in practice.
    In recent years, efforts have been made to relax this restriction.
    Motivated by these developments, this study proposes an entropy estimation method based on the Pitman--Yor process, a representative approach in Bayesian nonparametrics.
    By approximating the true distribution as an infinite-dimensional process, the proposed method enables stable estimation even when the number of observed species is smaller than the true number of species.
    This approach provides a principled way to deal with the uncertainty in species diversity and enhances the reliability and robustness of entropy-based diversity assessment.
    In addition, we investigate the convergence property of the Shannon entropy for regularly varying distributions and use this result to establish the consistency of the proposed estimator.
    Finally, we demonstrate the effectiveness of the proposed method through numerical experiments.
\end{abstract}

\vspace{2.5truemm}

\noindent
{\bf Keywords}: Bayesian nonparametrics, entropy, information theory, Pitman--Yor process, regularly varying distribution, small-sample high-dimensional setting \\
{\bf Mathematics Subject Classification 2020}: 94A17, 62F15, 62B10

\section{Introduction}\label{sec:intro}
Entropy is a measure used to characterize diversity and model complexity. 
Among various notions of entropy, the Shannon entropy, also known as the Shannon--Wiener index, is a fundamental quantity that has been extensively studied in statistics and machine learning.
The Shannon entropy is employed in various fields such as information theory, ecology, genetics, and natural language processing.
For diversity indices, see, for example, Leinster~\cite{Leinster} and Patil and Taillie~\cite{Patil_1982}.
The problem of estimating the Shannon entropy of a population distribution has been studied for a long time \cite{Archer2014,Arora,ChaoShen_2003,Condit_2002,Hausser}.
In particular, we focus on the construction of an accurate estimator that performs well in small-sample regimes and remains applicable when the true number of species is unknown, as considered in Archer et al.~\cite{Archer2014}.

Throughout this paper, we use the following notation.
For $M \in \mathbb{N}$, let
$\Delta_M:=\{\boldsymbol{p}\in\mathbb{R}^M|\sum_{i=1}^M p_i=1, \ p_i>0 \ (i=1,\ldots,M)\}$ and let $\Delta_\infty := \{(p_1,p_2,\ldots) | \sum_{i=1}^\infty p_i=1, \ p_i>0 \ (i=1,2,\ldots)\}$.
Let $K \in \mathbb{N}$ denote the number of species, and consider a $K$-dimensional probability vector $\bm{p}=(p_1,\dots,p_K)\in\Delta_{K}$.
The Shannon entropy of $\bm{p}$, denoted by $H = H(\bm{p})$, is defined as
\begin{equation}
    H(\bm{p})=-\sum_{i=1}^Kp_i\log p_i\label{def:shannon}.
\end{equation}
We consider the problem of estimating $H(\boldsymbol{p})$ based on an independent and identically distributed sample of size $N \in \mathbb{N}$ drawn from the population distribution $\boldsymbol{p}$.

In this study, we consider the problem of estimating $H$ in the setting where both $K$ and $\boldsymbol{p}$ are unknown, as in \cite{Archer2014, Jiao_2015}.
Let $T$ denote the number of observed species in the random sample and let $n_1, \ldots, n_T$ denote their observed frequencies.
When the sample size $N$ is sufficiently large relative to $K$, it is reasonable to assume that $T = K$.
In this case, the simplest and most widely used approach is to estimate $p_1, \ldots, p_T$ by maximum likelihood and substitute the resulting estimates into \eqref{def:shannon}.
Specifically, the maximum likelihood estimator of each probability mass is given by
\begin{equation*}
    \hat{p}^{\mathrm{ML}}_i=\frac{n_i}{N}, \quad i=1,\ldots,T
\end{equation*}
and the corresponding maximum likelihood estimator of the entropy, denoted by $\hat{H}^{\mathrm{ML}}$, is
\begin{equation*}
    \hat{H}^{\mathrm{ML}} = H((\hat{p}^{\mathrm{ML}}_1,\ldots,\hat{p}^{\mathrm{ML}}_{T})) =-\sum_{i=1}^{T} \frac{n_i}{N}\log\frac{n_i}{N}.
\end{equation*}
In the literature, alternative estimators have been proposed, including the Miller--Madow estimator~\cite{Miller_1955} and the Chao--Shen estimator{~\cite{ChaoShen_2003}.
However, it has been indicated that these methods tend to perform poorly when the sample size $N$ is not sufficiently large relative to the true number of species $K$ \cite{Paninski_2004,Yihong_2016}.
To address this issue, Hausser and Strimmer~\cite{Hausser} proposed a shrinkage estimator for the individual probability masses toward $1/K$ under the assumption that $K$ is known, and demonstrated that the resulting estimator exhibits favorable properties even for small sample sizes.
In contrast, we consider the more challenging setting in which the sample size $N$ is smaller than the true number of species $K$, and $K$ is unknown.
Under this regime, we propose a new estimator of $H$.
The proposed approach is based on the Pitman--Yor process, which is widely used in the Bayesian nonparametric framework.
By employing an infinite-dimensional model induced by the Pitman--Yor process, the proposed method accounts for the contribution of unobserved species and enables stable entropy estimation.
Although the model distribution differs from the true distribution $\bm{p}$, simulation results show that our method performs well in estimating $H$, particularly when $K$ is large.
Moreover, it  performs comparably to the aforementioned conventional estimators when $N$ is large.

The remainder of the paper is organized as follows.
In Section~\ref{sec:pitman--yor}, we derive the predictive distribution induced by the Pitman--Yor process and the associated Shannon entropy.
In Section~\ref{sec:mixture_model}, we introduce the Dirichlet--Pitman--Yor mixture model considered in this study and discuss how it is used to construct an estimator of the Shannon entropy.
In Section~\ref{sec:proposed_estimator}, we present the proposed estimator for the Shannon entropy.
In Section~\ref{sec:finiteness}, we derive conditions for the finiteness of the Shannon entropy for discrete regularly varying distributions.
In Section~\ref{sec:consistency}, we establish the consistency of the proposed estimator.
Finally, in Sections~\ref{sec:simulation} and~\ref{sec:application}, we evaluate the performance of the proposed method through numerical experiments and applications to real data.

\section{Pitman--Yor predictive distribution}\label{sec:pitman--yor}

\subsection{Pitman--Yor process}
To address the setting in which the true number of species $K$ is unknown, we adopt a Bayesian nonparametric framework.
This framework allows us to define probability distributions on infinite-dimensional spaces without fixing the dimension of the probability vector in advance.
A detailed discussion of this approach can be found in Pitman and Yor~\cite{Pitman_1997}.
For a general introduction to Bayesian nonparametrics, see, for example, Ghosal and van der Vaart~\cite{Ghosal_2017}.

In particular, we employ the model known as the Pitman--Yor process (hereafter abbreviated as PYP).
This model is characterized by a discount parameter $d \in [0,1)$ and a concentration parameter $\alpha > -d$, and is denoted by $\mathrm{PY}(d,\alpha)$ throughout this paper.
A well-known construction of a $\boldsymbol{\pi} = (\pi_1, \pi_2, \ldots) \in \Delta_\infty$ distributed according to the Pitman--Yor process is given by the stick-breaking process.
Let $\{V_i\}_{i \geq 1}$ be a sequence of independent random variables, and define the associated sequence $\{\pi_i\}_{i \ge 1}$ by
\begin{equation*}
V_i \sim \mathrm{Beta}(1 - d,\alpha + i d), \quad
\pi_i = V_i \prod_{j=1}^{i-1} (1 - V_j), \quad
i=1,2,\ldots .
\end{equation*}
Then $\boldsymbol{\pi}$ forms an infinite sequence of nonnegative random variables whose sum converges to one almost surely, and we write $\boldsymbol{\pi} \sim \mathrm{PY}(d,\alpha)$.
Hereafter, we use the notation $\boldsymbol{V}_{1:k} = (V_1,\ldots,V_k)$ for $k=1,2,\ldots$, and $\boldsymbol{V}_{1:\infty} = (V_1,V_2,\ldots)$.

\begin{rem}
Let us explain the roles of the two parameters in $\mathrm{PY}(d,\alpha)$
The concentration parameter $\alpha$ controls how much probability mass is allocated to the leading components of $\boldsymbol{\pi}$.
For small values of $\alpha$, the first few components carry most of the probability mass, whereas for larger $\alpha$ the distribution becomes more uniform.
In contrast, the discount parameter $d$ governs the tail behavior of $\boldsymbol{\pi}$.
Larger values of $d$ lead to heavier power-law tails \cite{Archer2014}.
\end{rem}

\subsection{Pitman--Yor predictive distribution for the multinomial model}
 
Let $d \in [0,1)$ and $\alpha > -d$.
Consider the hierarchical model for a discrete random variable $Z$ given by
\[ \mathsf{P} (Z=k|\bm{\pi})=\pi_k (k=1,2,\ldots), \quad 
\bm{\pi}\sim\mathrm{PY}(d, \alpha) .
\]
Then the following proposition holds for the marginal distribution of $Z$, denoted by $P_{d,\alpha}^{\mathrm{MPY}}(\cdot) = \mathsf{P}(Z = \cdot )$.
Hereafter, we refer to the probability distribution $P_{d,\alpha}^{\mathrm{MPY}}(\cdot)$ as the marginal Pitman--Yor process.

\begin{prop}
Let $\bm{\pi}\sim\mathrm{PY}(d, \alpha)$.
For $k=1,2,\ldots$, suppose that $P(Z=k|\bm{\pi})=\pi_k$.
Then the marginal distribution of $Z$, denoted by $P_{d,\alpha}^{\mathrm{MPY}}$, is given by
    \begin{align}
        P_{d,\alpha}^\mathrm{MPY}(k)
        &=\frac{1-d}{\alpha}\prod_{j=1}^k
        \frac{\alpha+(j-1)d}{\alpha+(j-1)d+1}.\label{eq:MPY}
    \end{align}
\end{prop}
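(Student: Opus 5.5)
The plan is to compute the marginal probability directly as an expectation over the stick-breaking representation. By the tower property, $P^{\mathrm{MPY}}_{d,\alpha}(k)=\mathsf{P}(Z=k)=\mathsf{E}[\mathsf{P}(Z=k\mid\bm{\pi})]=\mathsf{E}[\pi_k]$. Substituting the stick-breaking construction gives $\pi_k=V_k\prod_{j=1}^{k-1}(1-V_j)$. Since the $V_j$ are independent, the expectation factorizes:
\[
\mathsf{E}[\pi_k]=\mathsf{E}[V_k]\prod_{j=1}^{k-1}\mathsf{E}[1-V_j].
\]
All factors are bounded in $[0,1]$, so no integrability issues arise.

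Next, I will use the mean of a Beta distribution. For $V_j\sim\mathrm{Beta}(1-d,\alpha+jd)$, we have $\mathsf{E}[V_j]=(1-d)/(1+\alpha+(j-1)d)$ and $\mathsf{E}[1-V_j]=(\alpha+jd)/(1+\alpha+(j-1)d)$. Hence
\[
\mathsf{E}[\pi_k]=\frac{1-d}{\alpha+(k-1)d+1}\prod_{j=1}^{k-1}\frac{\alpha+jd}{\alpha+(j-1)d+1}.
\]

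The final step is bookkeeping to match \eqref{eq:MPY}. The denominators $\alpha+(j-1)d+1$ for $j=1,\ldots,k$ appear in both expressions. In \eqref{eq:MPY}, the numerator $\prod_{j=1}^k(\alpha+(j-1)d)$ divided by $\alpha$ equals $\prod_{j=1}^{k-1}(\alpha+jd)$, since it is the same product with index shifted by one and the $j=1$ factor $\alpha$ cancelled. So the two expressions coincide.

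The only delicate point, which I expect to be the main (minor) obstacle, is this cancellation of $\alpha$. The form \eqref{eq:MPY} is literally meaningful only when $\alpha\neq 0$, which is permitted when $d>0$. I will remark that the factor $\alpha$ cancels, so \eqref{eq:MPY} should be read as the cancelled product above; that product is valid for all $d\in[0,1)$ and $\alpha>-d$. Positivity of every factor follows from $\alpha+jd>0$ for $j\ge1$.
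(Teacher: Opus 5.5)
Your proposal is correct and follows the paper's proof essentially step for step: write $P^{\mathrm{MPY}}_{d,\alpha}(k)=\mathsf{E}[\pi_k]$, factor the expectation using independence of the $V_j$, insert the Beta means, and re-index the product. Your extra remark is a useful clarification the paper omits: the displayed form needs $\alpha\neq 0$, which matters because $\alpha=0$ is admissible when $d>0$, whereas the cancelled product $\frac{1-d}{\alpha+(k-1)d+1}\prod_{j=1}^{k-1}\frac{\alpha+jd}{\alpha+(j-1)d+1}$ is valid for all $\alpha>-d$.
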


\begin{proof}
We derive the marginal distribution of $Z$ under the model $\mathsf{P}(Z=k|\boldsymbol{V}_{1:\infty})=\pi_{k}$, $k=1,2,\ldots$, $\boldsymbol{\pi}\sim\mathrm{PY}(d,\alpha)$.
For $k=1,2,\ldots$, we have
\begin{align*}
    P_{d,\alpha}^\mathrm{MPY}(k)
    &=\int \mathsf{P}(Z=k|\mathbf{V}_{1:k}) \mathsf{P}(\mathrm{d} \mathbf{V}_{1:k}|d,\alpha)\\
    &=\mathsf{E}\left[V_{k} \prod_{j=1}^{k-1}(1-V_j)\right]\\
    &=\mathsf{E}[V_k]\prod_{j=1}^{k-1}\mathsf{E}\left[1-V_j\right] \\
    &=\frac{1-d}{\alpha+(k-1)d+1}\prod_{j=1}^{k-1}\frac{\alpha+jd}{\alpha+(j-1)d+1} \\
    &=\frac{1-d}{\alpha}\prod_{j=1}^{k}\frac{\alpha+(j-1)d}{\alpha+(j-1)d+1}.
\end{align*}
Here, we use the independence of $\{V_i\}_{i\geq1}$ and the expectation of the Beta distribution.
\end{proof}

\begin{rem}
When $d=0$, equation~\eqref{eq:MPY} reduces to
\begin{align*}
    P_{0,\alpha}^\mathrm{MPY}(k)
    =\frac{1}{\alpha+1}\left(1-\frac{1}{\alpha+1}\right)^{k-1}
\end{align*}
which corresponds to a geometric distribution with parameter $1/(\alpha+1)$.
For $d\neq0$, equation~\eqref{eq:MPY} can be rewritten using the Gamma function as 
\begin{equation}\label{eq:MPY2}
    P_{d,\alpha}^\mathrm{MPY}(k)=\frac{1-d}{\alpha}\cdot\frac{\Gamma(\frac{\alpha}{d}+k)}{\Gamma(\frac{\alpha}{d})}\cdot\frac{\Gamma(\frac{\alpha+1}{d})}{\Gamma(\frac{\alpha+1}{d}+k)}
\end{equation}
which corresponds to the Waring distribution.
See, for example, Chapter~11 of Johnson et al.~\cite{Jonson_2005} for further details on the Waring distribution.
\end{rem}

Figure~\ref{fig:prePY} shows the probability mass functions of the marginal Pitman--Yor process.
The upper-left and upper-right panels correspond to $\alpha = 49$ and $\alpha = 99$.
In the top row, the predictive distributions are shown for different values of $d$.
The gray curve corresponds to $d = 0$ and reduces to the geometric distribution.
The light blue and dark blue curves correspond to $d = 0.5$ and $d = 0.75$, respectively.
In the bottom row, the parameters $(d,\alpha)$ are chosen so that $P_{d,\alpha}^\mathrm{MPY}(1) =0.02$ for all cases.
Figure~\ref{fig:prePY} shows that the marginal Pitman--Yor process reflects the effects of the Pitman--Yor parameters.
In particular, the discount parameter $d$ controls the tail behavior, ranging from light-tailed to heavy-tailed distributions, while the concentration parameter $\alpha$ governs the overall shape of the distribution.

\begin{figure}[t]
    \centering
    \begin{minipage}{0.45\textwidth}
        \centering
        \includegraphics[width=\textwidth]{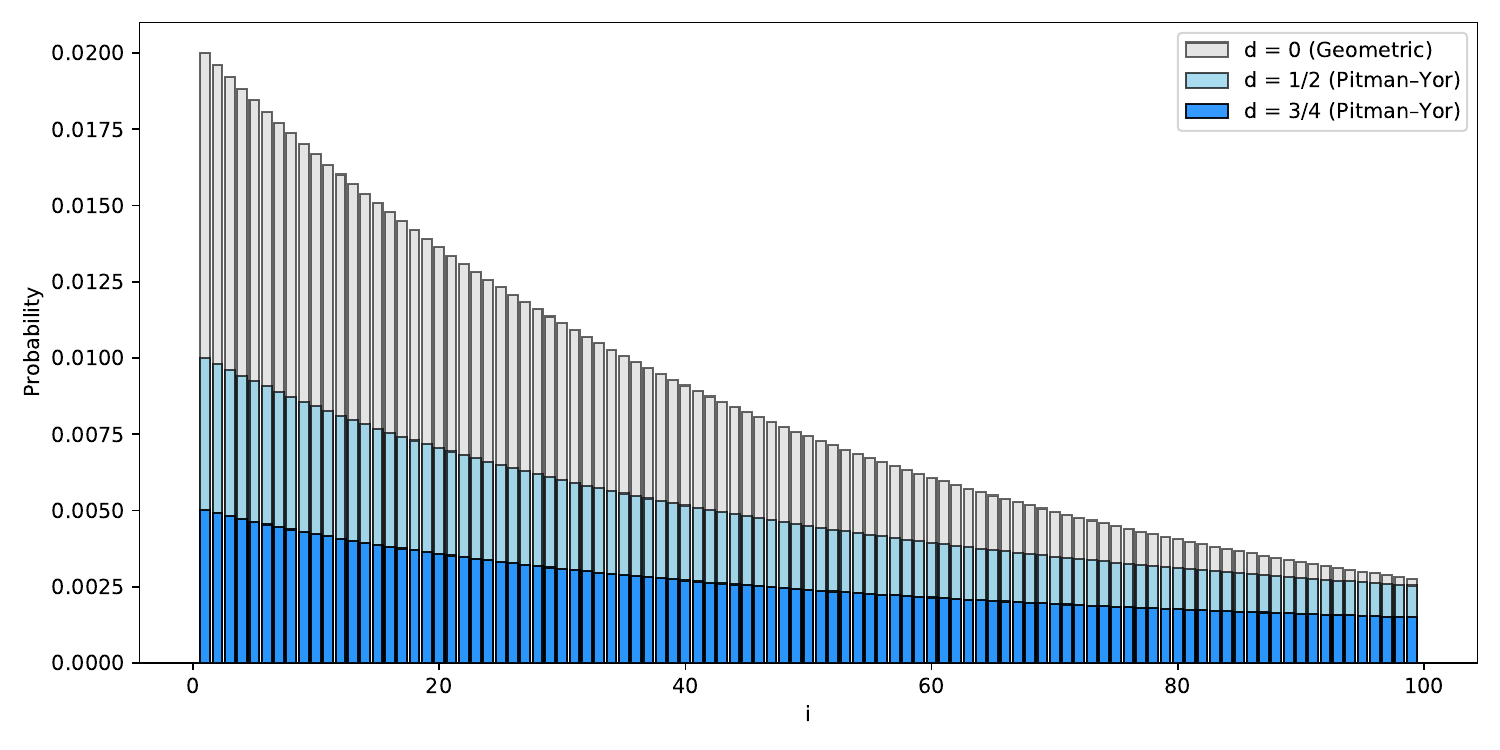}
    \end{minipage}%
    \hspace{1em}
    \begin{minipage}{0.45\textwidth}
        \centering
        \includegraphics[width=\textwidth]{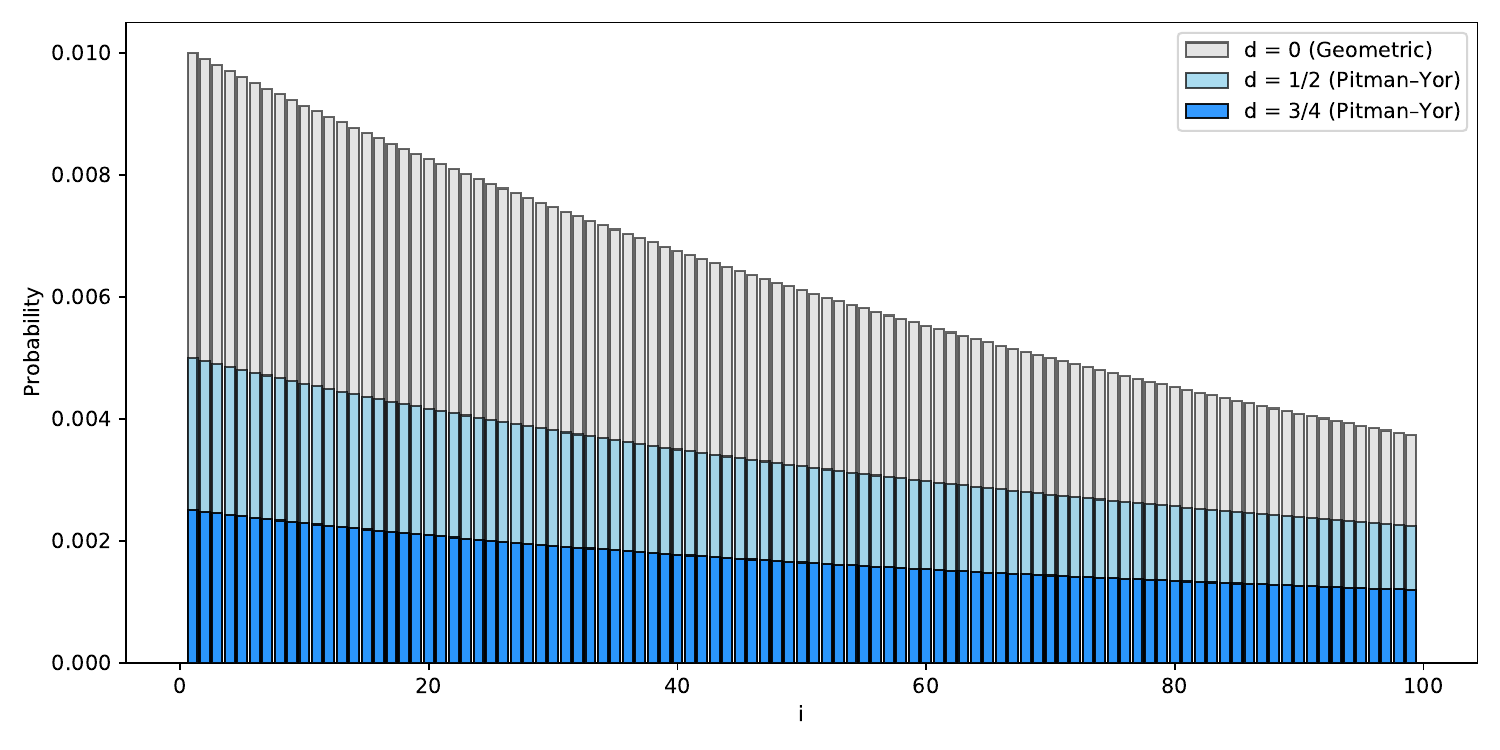}
    \end{minipage}%
    \hspace{1em}
    \begin{minipage}{0.45\textwidth}
        \centering
        \includegraphics[width=\textwidth]{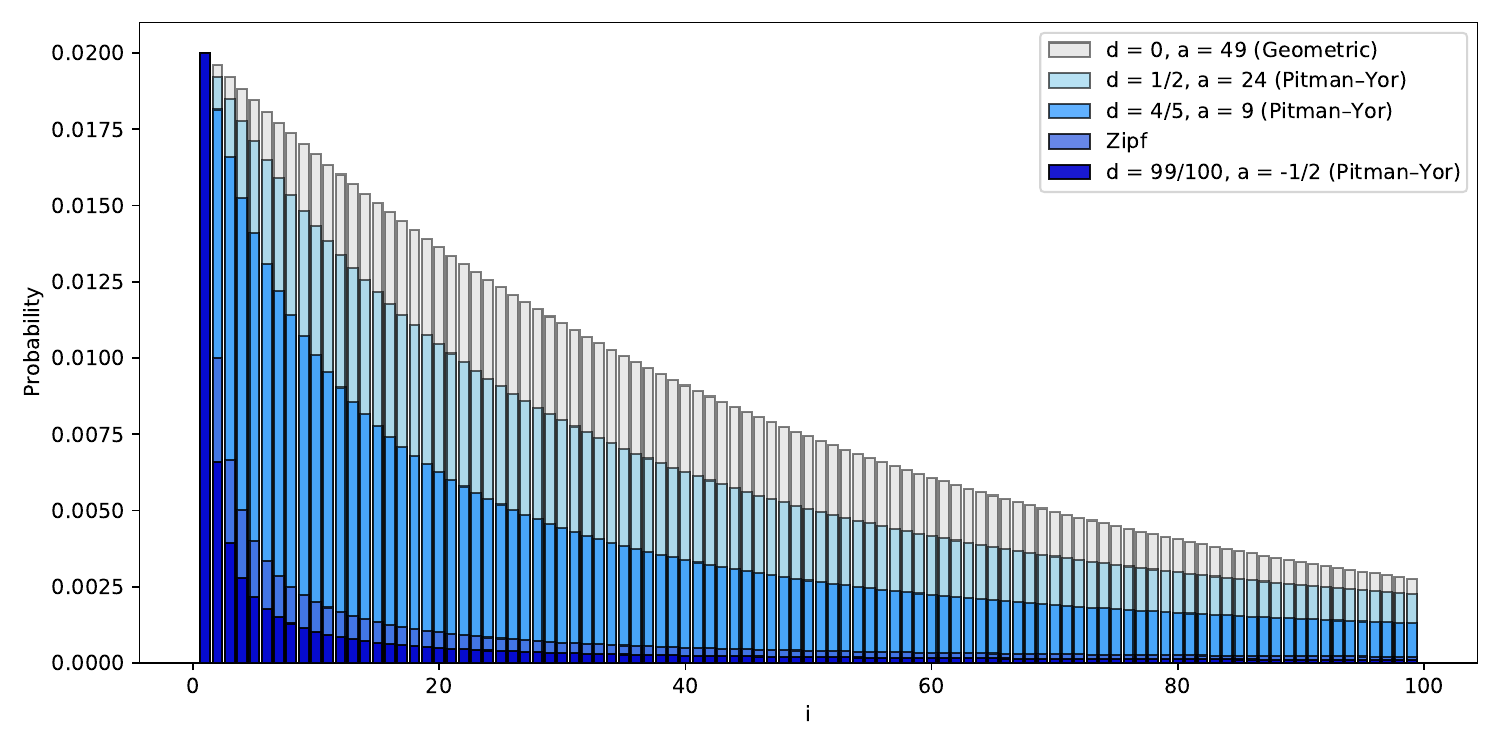}
    \end{minipage}%
    \caption{
    The probability mass functions of the marginal Pitman--Yor process.
    The upper-left and upper-right panels correspond to $\alpha=49$ and $\alpha=99$, respectively.
    The top row shows probability mass functions for different values of the discount parameter $d$: $d=0$ (gray, geometric), $d=0.5$ (light blue), and $d=0.75$ (dark blue).
    In the bottom row, the parameters $(d,\alpha)$ are chosen such that $P_{d,\alpha}^{\mathrm{MPY}}(1)=0.02$ in all cases.
    }
  \label{fig:prePY}
\end{figure}

\subsection{The Shannon entropy of MPY}
We consider the Shannon entropy
\begin{equation}
H(\bm{p}) = -\sum_{i=1}^\infty p_i\log p_i\label{def:shannon2}
\end{equation}
of an infinite-dimensional probability vector $\boldsymbol{p} = (p_1, p_2, \ldots) \in \Delta_\infty$ whose $k$-th component is given by \eqref{eq:MPY}.
When $d=0$, the probability mass function is given by 
\[ p_k = \frac{1}{\alpha+1}\left(1-\frac{1}{\alpha+1}\right)^{k-1} \quad (k=1,2,\dots),\]
and the Shannon entropy can be computed explicitly as
\begin{equation*}
    H(\bm{p})=(1+\alpha)\log(1+\alpha)-\alpha\log\alpha.
\end{equation*}
When $d \neq 0$, no closed-form expression for the entropy is available.
The Shannon entropy $H(\bm{p})$ is given by substituting \eqref{eq:MPY} into \eqref{def:shannon2}.
Since the support of the distribution is infinite, numerical computation requires approximation.
For numerical evaluation, we truncate the infinite sum at a sufficiently large index and approximate the remaining tail using a combination of Stirling's formula and integration by parts.
Regarding the approximation, the following proposition holds.

\begin{prop}
Let $\bm{p} \in \Delta_{\infty}$ be $\bm{p}=(P_{d,\alpha}^\mathrm{MPY}(1),P_{d,\alpha}^\mathrm{MPY}(2),\ldots)$ defined by \eqref{eq:MPY}.
If $d \neq 0$, then as $n \to \infty$, the Shannon entropy satisfies
\begin{align*}
    H(\bm{p})
    &=-\log\left(\frac{(1-d)\Gamma(\frac{\alpha+1}{d})}{\alpha\Gamma(\frac{\alpha}{d})}\right)-\frac{(1-d)\Gamma(\frac{\alpha+1}{d})}{\alpha\Gamma(\frac{\alpha}{d})}\sum_{k=1}^{n}\left(\frac{\Gamma(\frac{\alpha}{d}+k)}{\Gamma(\frac{\alpha+1}{d}+k)}\right)\log\left(\frac{\Gamma(\frac{\alpha}{d}+k)}{\Gamma(\frac{\alpha+1}{d}+k)}\right)\\
    &\quad + \frac{(1-d)\Gamma(\frac{\alpha+1}{d})}{\alpha\Gamma(\frac{\alpha}{d})} \left\{ \frac{(n+1)^{\frac{d-1}{d}}}{1-d}\log(n+1)+\frac{d(n+1)^{\frac{d-1}{d}}}{(d-1)^2} \right\} +O\left(n^{-\frac{1}{d}}\log n\right).
\end{align*}
\end{prop}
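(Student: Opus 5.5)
We prove this by splitting the entropy series at index $n$, keeping the first $n$ terms exactly, and replacing the tail by an integral, using Stirling's formula for the asymptotics of the summands.

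\textbf{Notation and splitting.} Put $a=\alpha/d$ and $b=(\alpha+1)/d$, so that $b-a=1/d=:s$. Since $0<d<1$ we have $s>1$. Also put
\[
C=\frac{(1-d)\Gamma(b)}{\alpha\Gamma(a)}, \qquad g_k=\frac{\Gamma(a+k)}{\Gamma(b+k)}.
\]
By \eqref{eq:MPY2}, $p_k=Cg_k$. Since $\sum_k p_k=1$,
\[
H(\bm p)=-\sum_k p_k(\log C+\log g_k)=-\log C-C\sum_{k=1}^{n} g_k\log g_k-C\sum_{k>n} g_k\log g_k .
\]
The first two terms are exactly the first two terms of the claimed expansion. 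It therefore remains to show
\[
-C\sum_{k>n} g_k\log g_k=C\left\{\frac{(n+1)^{1-s}}{1-d}\log(n+1)+\frac{d(n+1)^{1-s}}{(1-d)^2}\right\}+O(n^{-s}\log n).
\]
Note that $(d-1)^2=(1-d)^2$ and $1-s=(d-1)/d$.

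\textbf{Asymptotics of the summands.} By Stirling's formula (equivalently, the standard expansion $\Gamma(x+a)/\Gamma(x+b)=x^{a-b}(1+O(1/x))$), we have $g_k=k^{-s}(1+O(1/k))$. Hence $\log g_k=-s\log k+O(1/k)$, and
\[
-g_k\log g_k=s\,k^{-s}\log k+O\left(k^{-s-1}\log k\right).
\]
Summing the error over $k>n$ gives $O(n^{-s}\log n)$, by comparison with $\int_n^\infty x^{-s-1}\log x\,dx$.

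\textbf{Sum-to-integral comparison.} Let $f(x)=x^{-s}\log x$. It is decreasing for $x\ge e^{1/s}$, so for large $n$
\[
\int_{n+1}^\infty f\le\sum_{k\ge n+1}f(k)\le f(n+1)+\int_{n+1}^\infty f .
\]
Thus $\sum_{k>n}f(k)=\int_{n+1}^\infty f+O(n^{-s}\log n)$. Integration by parts gives
\[
\int_{N}^\infty x^{-s}\log x\,dx=\frac{N^{1-s}\log N}{s-1}+\frac{N^{1-s}}{(s-1)^2}.
\]
Multiplying by $s=1/d$ and using $s/(s-1)=1/(1-d)$ and $s/(s-1)^2=d/(1-d)^2$, with $N=n+1$, yields exactly the bracketed expression. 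Multiplying through by $C$ then gives the claim.

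\textbf{Main obstacle.} The delicate step is making the Stirling remainder uniform enough that the $O(1/k)$ corrections in both $g_k$ and $\log g_k$ add up, over the tail, to only $O(n^{-s}\log n)$. We will quote the uniform bound $\Gamma(x+a)/\Gamma(x+b)=x^{a-b}(1+O(1/x))$ for $x\ge1$. Since the ratio is continuous and positive, this also gives the bound $|\log g_k+s\log k|\le c/k$ for all $k\ge1$.
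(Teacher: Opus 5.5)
Your proposal is correct and follows essentially the same route as the paper's proof. Both split the series at $n$, use $\Gamma(x+a)/\Gamma(x+b)=x^{a-b}(1+O(1/x))$ to reduce the tail to $\frac{1}{d}\sum_{k>n}k^{-1/d}\log k$ up to $O(n^{-1/d}\log n)$, and evaluate that sum against $\int_{n+1}^\infty x^{-1/d}\log x\,dx$ by integration by parts; your monotonicity bound for the sum-to-integral step and your uniformity remark on the Stirling remainder supply justifications the paper only asserts.
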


\begin{proof}
When $d\neq0$, $p_k$ is given by \eqref{eq:MPY2} for $k=1,2,\ldots$.
From \eqref{def:shannon2}, the Shannon entropy is given by
\begin{align*}
    H(\bm{p})
    &=-\sum_{k=1}^{\infty}p_k\log\left(\frac{1-d}{\alpha}\cdot\frac{\Gamma(\frac{\alpha}{d}+k)}{\Gamma(\frac{\alpha}{d})}\cdot\frac{\Gamma(\frac{\alpha+1}{d})}{\Gamma(\frac{\alpha+1}{d}+k)}\right)\\
    &=-\log\left(\frac{(1-d)\Gamma(\frac{\alpha+1}{d})}{\alpha\Gamma(\frac{\alpha}{d})}\right)-\frac{(1-d)\Gamma(\frac{\alpha+1}{d})}{\alpha\Gamma(\frac{\alpha}{d})}\sum_{k=1}^{\infty}\left(\frac{\Gamma(\frac{\alpha}{d}+k)}{\Gamma(\frac{\alpha+1}{d}+k)}\right)\log\left(\frac{\Gamma(\frac{\alpha}{d}+k)}{\Gamma(\frac{\alpha+1}{d}+k)}\right).
\end{align*}
We split the infinite sum appearing in the second term on the right-hand side into the partial sum up to the $n$-th term and the remaining tail, and evaluate the latter.
For the ratio of Gamma functions, it is well known that for $a,b>0$, as $x \to \infty$,
\begin{equation*}
    \frac{\Gamma(x+a)}{\Gamma(x+b)}=x^{a-b}\left(1+O\left(\frac{1}{x}\right)\right).
\end{equation*}
Moreover, for $a>1$, we have
\begin{equation*}
    \int_{n+1}^\infty x^{-a}\log(x) \mathrm{d}x
    = - \frac{(n+1)^{1-a}}{1-a}\log(n+1)+\frac{(n+1)^{1-a}}{(1-a)^2}.
\end{equation*}
Applying these relations, the tail sum is approximated as
\begin{align*}
    &\sum_{k=n+1}^{\infty}\left(\frac{\Gamma(\frac{\alpha}{d}+k)}{\Gamma(\frac{\alpha+1}{d}+k)}\right)\log\left(\frac{\Gamma(\frac{\alpha}{d}+k)}{\Gamma(\frac{\alpha+1}{d}+k)}\right)\\
    &=\sum_{k=n+1}^{\infty}k^{-\frac{1}{d}}\left(1+O\left(\frac{1}{k}\right)\right) \log \left(k^{-\frac{1}{d}}\left(1+O\left(\frac{1}{k}\right)\right)\right)\\
    &=\sum_{k=n+1}^\infty\left\{k^{-\frac{1}{d}} \log\left(k^{-\frac{1}{d}}\right) +O\left(k^{-(1+\frac{1}{d})}\log k\right)\right\}\\
    &={-\frac{1}{d}}\sum_{k=n+1}^\infty k^{-\frac{1}{d}}\log k+O\left((n+1)^{-\frac{1}{d}}\log(n+1)\right)\\
    &=-\frac{1}{d}\left(\int_{n+1}^\infty x^{-\frac{1}{d}}\log(x) \mathrm{d}x + O\left(n^{-\frac{1}{d}}\log n\right)\right)+O\left((n+1)^{-\frac{1}{d}}\log(n+1)\right)\\
    &=\left\{ \frac{(n+1)^{\frac{d-1}{d}}}{d-1}\log(n+1)-\frac{d(n+1)^{\frac{d-1}{d}}}{(d-1)^2}\right\} +O\left(n^{-\frac{1}{d}}\log n\right).
\end{align*}
This completes the proof.
\end{proof}

\section{Mixture model with a Pitman--Yor process}\label{sec:mixture_model}

\subsection{Dirichlet--Pitman--Yor mixture model}

Given a random sample of size $N$, let $T$ denote the number of observed species, and $n_i$ the number of observations of species $i$ for $i = 1,\ldots,T$.
We write $\mathbf{y}=(n_{1},\dots,n_{T})^{\top}$.
Then, we have $N = \sum_{i=1}^T n_i$ by definition. 
Archer et al.~\cite{Archer2014} introduced the following mixture model for constructing a probability vector based on the Pitman--Yor process.

\begin{dfn}(Dirichlet--Pitman--Yor mixture model)~\cite{Archer2014}
Given the observation $\mathbf{y}=(n_{1},\dots,n_{T})^{\top}$, we define a random infinite-dimensional probability vector $\bm{q}\in\Delta_{\infty}$ as follows:
\begin{align*}
& (q_{1},\dots,q_{T}, \tilde{q}) | \mathbf{y} \sim \mathrm{Dir}_{T+1}(n_{1}-d,\dots,n_{T}-d,\alpha+Td),\\
& \boldsymbol{\pi} | \mathbf{y}\sim \mathrm{PY}(d,\alpha+Td),\\
& \bm{q} = (q_{1},\dots,q_{T}, \tilde{q} \boldsymbol{\pi}).
\end{align*}
Here, $\mathrm{Dir}_{M}(\theta_{1},\ldots,\theta_{M})$ denotes the $M$-dimensional Dirichlet distribution with parameter vector $(\theta_{1},\ldots,\theta_{M})$.
We refer to this construction as the Dirichlet--Pitman--Yor mixture model (hereafter abbreviated as DPYM), and write $\bm{q}\sim\mathrm{DPYM}(\mathbf{y},d,\alpha)$.
\end{dfn}

\begin{rem}
In the DPYM, probabilities for previously observed species are assigned through a Dirichlet distribution whose parameters are weighted by the observed frequencies, whereas probabilities for unobserved species are modeled using a Pitman--Yor process.
This construction enables the DPYM to reflect the information contained in the observed data, while flexibly allowing for the existence of unseen species.
\end{rem}

Under the DPYM, given the observed frequency $\mathbf{y}=(n_{1},\dots,n_{T})^{\top}$ of $N$ individuals, the predictive distribution for a new observation $Z$ is given by
\begin{align}
    &\mathsf{P}(Z = t|\mathbf{y},d,\alpha)
    =\frac{n_{t}-d}{N+\alpha} \quad (t = 1 ,\ldots, T) \label{eq:DPYMP1}, \\
    &\mathsf{P}(Z = T+k|\mathbf{y},d,\alpha)
    =\frac{1-d}{N+\alpha}\prod_{j=1}^{k}\frac{\alpha+(T+j-1)d}{\alpha+(T+j-1)d+1} \quad (k=1,2,\ldots). \label{eq:DPYMP2}
\end{align}
Using \eqref{eq:DPYMP1} and \eqref{eq:DPYMP2}, we define
\begin{align}
    &\bm{q}^*=\left(\mathsf{P}(Z=1|\mathbf{y},d,\alpha),\dots,\mathsf{P}(Z=T|\mathbf{y},d,\alpha),\frac{\alpha+Td}{N+\alpha}\right) =\left( \frac{n_1 - d}{N+\alpha} ,\dots, \frac{n_T - d}{N+\alpha} , \frac{\alpha+Td}{N+\alpha} \right), \label{eq:qstar} \\
    &\pi_{k} = \frac{N+\alpha}{\alpha+Td} \mathsf{P}(Z = T+k|\mathbf{y},d,\alpha) = \frac{1-d}{\alpha+Td}\prod_{j=1}^{k}\frac{\alpha+(T+j-1)d}{\alpha+(T+j-1)d+1} \quad (k=1,2,\ldots). \nonumber
\end{align}
We write $\boldsymbol{\pi} = (\pi_1,\pi_2,\ldots)$.
Then, an infinite dimensional probability vector $\bm{q}$ is given by
\begin{equation}
    \bm{q}=\left( \frac{n_1 - d}{N+\alpha} ,\dots, \frac{n_T - d}{N+\alpha} , \frac{\alpha+Td}{N+\alpha} \boldsymbol{\pi}\right). \label{defq}
\end{equation}
The Shannon entropy of $\bm{q}$ can be decomposed as
\begin{equation*}
    H(\bm{q})=H(\bm{q}^*)+\frac{\alpha+Td}{N+\alpha}H(\boldsymbol{\pi}).
\end{equation*}
Based on this model, estimation of the population entropy $H(\bm{p})$ depends on the choice of the hyperparameters $(d,\alpha)$.
Accordingly, we propose a method for selecting $(d,\alpha)$ based on minimizing the cross entropy between $\bm{q}$ and $\bm{p}$, as described in the next section.

\begin{rem}
The aim of this study is to provide a principled framework for bias correction in entropy estimation, rather than to identify the true data-generating model.
For this reason, we do not attempt to estimate the parameters $(d,\alpha)$ via model-fitting methods such as maximum likelihood estimation.
\end{rem}

\subsection{Data-driven parameter selection}
We consider a $K$-dimensional probability vector $\bm{p}= (p_1,\ldots,p_K) \in\Delta_K$, where both $K$ and $\bm{p}$ are unknown.
Given a random sample drawn from the underlying population, we approximate the true distribution $\bm{p}$ by the probability vector $\bm{q} \in \Delta_{\infty}$ defined by \eqref{defq}.
As a measure of discrepancy between two probability vectors $\bm{p}$ and $\bm{q}$, we use the cross entropy
\begin{equation}
    H(\bm{p},\bm{q})=-\sum_{i=1}^{K}p_i\log q_i.
\end{equation}
Moreover, $H(\bm{p},\bm{q})$ can be expressed by using the Kullback--Leibler divergence $D_{KL}(\bm{p} || \bm{q})$ and the Shannon entropy $H(\bm{p})$ of $\bm{p}$ as
\begin{equation*}
    H(\bm{p},\bm{q})=D_{KL}(\bm{p}||\bm{q})+H(\bm{p}).
\end{equation*}
By Jensen's inequality, we have $D_{KL}(\bm{p}||\bm{q}) \ge 0$ and $H(\bm{p}) \ge 0$.
Therefore, a smaller cross entropy  $H(\bm{p}, \bm{q})$ indicates a smaller discrepancy between $\bm{p}$ and $\bm{q}$.
In other words, by selecting $(d,\alpha)$ so as to minimize the cross entropy, we may expect $\bm{q}$ to provide a good approximation to $\bm{p}$.
The following proposition concerns the cross entropy between the true probability vector $\bm{p}$ and $\bm{q}$.

\begin{prop}\label{prop:upper}
Let $\bm{p}$ be a $K$-dimensional probability vector, and suppose that a random sample of size $N$ is drawn from $\bm{p}$, in which $T$ distinct species are observed.
Let $\mathbf{y}=(n_1,\ldots,n_T)$ denote the corresponding observation counts.
Then, the cross entropy $H(\bm{p}, \bm{q})$ between $\bm{p}$ and $\bm{q}$ defined by \eqref{defq} satisfies
\begin{equation}
   H(\bm{p}, \bm{q})\le \log(N+\alpha)-(C_{1}+C_{0})\log(1-d)+\frac{C_{0}}{2}(K-T+1) \log\left( \frac{\alpha+Td+1}{\alpha+Td} \right), \label{ueq:upper}
\end{equation}
where $C_1$ denotes the total probability mass of species observed exactly once, and $C_0$ denotes the total probability mass of unobserved species, as defined by
\begin{equation*}
    C_1=\sum_{i=1}^Kp_iI\{n_i=1\}, \quad C_0=1-\sum_{i=1}^Kp_iI\{n_i>0\},
\end{equation*}
where $I\{\cdot\}$ denotes the indicator function.
\end{prop}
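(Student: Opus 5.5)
The plan is to split the cross entropy $H(\bm{p},\bm{q})=-\sum_{i=1}^K p_i\log q_i$ into the observed and the unobserved species. Each part is bounded termwise using the explicit form of $\bm{q}$ in \eqref{defq}, that is, \eqref{eq:DPYMP1}--\eqref{eq:DPYMP2}. Write $a:=\alpha+Td$ and $m:=K-T$. Since the model $\bm{q}$ does not identify which unseen true species corresponds to which tail index $T+k$, I will fix the labelling: the $T$ observed species go to indices $1,\ldots,T$, and the $m$ unobserved species go to indices $T+1,\ldots,T+m$ in \emph{decreasing} order of their true probabilities $p$. This is the natural best matching. I would state it explicitly as the convention under which $H(\bm{p},\bm{q})$ is understood, since the bound is claimed for this matching.

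\emph{Observed species.} For $i\le T$ we have $-\log q_i=\log(N+\alpha)-\log(n_i-d)$.
\begin{itemize}
\item If $n_i=1$, this equals $\log(N+\alpha)-\log(1-d)$.
\item If $n_i\ge2$, then $n_i-d\ge 2-d>1$, so $-\log(n_i-d)<0$ and the term is at most $\log(N+\alpha)$.
\end{itemize}
Summing with weights $p_i$, the observed part is at most $(1-C_0)\log(N+\alpha)-C_1\log(1-d)$.

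\emph{Unobserved species.} For $k=1,\ldots,m$,
\[
-\log q_{T+k}=\log(N+\alpha)-\log(1-d)+\sum_{j=1}^k\log\Bigl(1+\frac{1}{a+(j-1)d}\Bigr).
\]
Each summand is at most $\log\frac{a+1}{a}$, because $d\ge0$ makes $a+(j-1)d\ge a$. Hence $-\log q_{T+k}\le \log(N+\alpha)-\log(1-d)+k\log\frac{a+1}{a}$. Multiplying by $p_{(k)}$ (the $k$-th largest unseen mass) and summing gives
\[
C_0\log(N+\alpha)-C_0\log(1-d)+\log\tfrac{a+1}{a}\sum_{k=1}^m k\,p_{(k)}.
\]

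\emph{Combining.} The remaining step, and the only non-routine one, is to bound $\sum_{k=1}^m k\,p_{(k)}$. Here $p_{(k)}$ is nonincreasing in $k$ while $k$ is increasing, so Chebyshev's sum inequality gives
\[
\sum_{k=1}^m k\,p_{(k)}\le \frac1m\Bigl(\sum_{k=1}^m k\Bigr)\Bigl(\sum_{k=1}^m p_{(k)}\Bigr)=\frac{m+1}{2}\,C_0 .
\]
(If $m=0$ the tail sum is empty and the bound is trivial.) Adding the observed and unobserved parts yields exactly \eqref{ueq:upper}.

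The main obstacle is conceptual rather than computational: justifying the decreasing-order labelling of the unseen species. Without it, one only gets the weaker factor $C_0(K-T)$ instead of $\frac{C_0}{2}(K-T+1)$.
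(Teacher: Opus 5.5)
Your proof is correct and follows essentially the same route as the paper. You split into observed and unobserved species, use $-\log(n_i-d)\le 0$ for $n_i\ge 2$, bound each factor $\log\frac{\alpha+(T+j-1)d+1}{\alpha+(T+j-1)d}$ by $\log\frac{\alpha+Td+1}{\alpha+Td}$, and finish with Chebyshev's sum inequality (the paper's Lemma~\ref{lem34}). Your explicit convention that the unseen species are indexed in decreasing order of $p$ is exactly what the paper tacitly assumes when it applies Lemma~\ref{lem34} to $p_{T+1},\ldots,p_K$ (rescaled by $C_0$), and you are right that the bound can fail without this convention.
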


\begin{proof}
It holds that
    \allowdisplaybreaks
    \begin{align*}
    &H(\bm{p},\bm{q})\\
    &=-\sum_{i=1}^{K}p_{i}\log q_{i}\\
    &=-\left\{\sum_{i=1}^{T}p_{i}\log\left(\frac{n_{i}-d}{N+\alpha}\right)+\sum_{i=1}^{K-T}p_{T+i}\log\left(\frac{1-d}{N+\alpha}\prod_{j=1}^{i}\frac{\alpha+(T+j-1)d}{\alpha+(T+j-1)d+1} \right) \right\}\\
    &=-\left\{\sum_{i=1}^{T}p_{i}\log(n_{i}-d)+\sum_{i=T+1}^{K}p_{i}\log (1-d)-\log(N+\alpha)\right\}+\sum_{i=1}^{K-T}p_{T+i}\sum_{j=1}^{i} \log \left( \frac{\alpha+(T+j-1)d+1}{\alpha+(T+j-1)d} \right) \\
    &\le\log(N+\alpha)-\sum_{i=1}^{T}p_{i}\log(n_{i}-d)-\sum_{i=T+1}^{K}p_{i}\log (1-d)+\sum_{i=1}^{K-T}p_{T+i}\sum_{j=1}^{i} \log\left(\frac{\alpha+Td+1}{\alpha+Td}\right) \\
    &=\log(N+\alpha)-\sum_{i=1}^{T}p_{i}\log(n_{i}-d)-\sum_{i=T+1}^{K}p_{i}\log (1-d)+\log \left( \frac{\alpha+Td+1}{\alpha+Td}\right) \sum_{i=1}^{K-T}ip_{T+i}\\
    &\le\log(N+\alpha)-(C_{1}+C_{0})\log(1-d) + \log\left(\frac{\alpha+Td+1}{\alpha+Td}\right) \sum_{i=1}^{K-T}ip_{T+i}\\
    &\le\log(N+\alpha)-(C_{1}+C_{0})\log(1-d)+\frac{C_{0}}{2}(K-T+1) \log \left( \frac{\alpha+Td+1}{\alpha+Td}\right).
\end{align*}
Here, the last inequality follows from Lemma~\ref{lem34} below.
\end{proof}

\begin{lem}\label{lem34}
Let $\tilde\Delta = \{ (p_1,\ldots,p_K) \in\mathbb{R}^K|\sum_{i=1}^Kp_i=1, p_i\ge 0 \ (i=1,\ldots,K) \}$.
Then,
\begin{equation}
    \max_{\boldsymbol{p} \in \tilde\Delta}\left\{\sum_{i=1}^Kip_i\mid p_1\ge p_2\ge\cdots\ge p_K,  \right\}=\frac{K+1}{2}.
\end{equation}
\end{lem}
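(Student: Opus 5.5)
The maximum is attained at the uniform vector $p_i = 1/K$, which gives $\sum_{i=1}^K i/K = (K+1)/2$. So the real content is the upper bound $\sum_{i=1}^K i p_i \le (K+1)/2$ for every nonincreasing $\bm p \in \tilde\Delta$. I would prove it by rewriting the weighted sum as an average of partial sums (summation by parts), or equivalently by Chebyshev's sum inequality.

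Write $S_m = \sum_{i=1}^m p_i$, so that $S_K = 1$. Since $i = \sum_{m=1}^{i} 1$, exchanging the order of summation gives
\begin{equation*}
\sum_{i=1}^K i p_i = \sum_{m=1}^K \sum_{i=m}^K p_i = \sum_{m=1}^K (1 - S_{m-1}) = K - \sum_{m=1}^{K-1} S_m .
\end{equation*}
The key step is the bound $S_m \ge m/K$ for each $m$. This holds because the first $m$ entries are the largest. Indeed, the average of $p_1,\ldots,p_m$ is at least $p_m$, and the average of $p_{m+1},\ldots,p_K$ is at most $p_{m+1} \le p_m$. Hence the average of the first $m$ entries is at least the overall average $1/K$.

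Substituting, we get
\begin{equation*}
\sum_{i=1}^K i p_i \le K - \sum_{m=1}^{K-1} \frac{m}{K} = K - \frac{K-1}{2} = \frac{K+1}{2}.
\end{equation*}
Together with the value at the uniform vector, this shows the maximum equals $(K+1)/2$.

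As an alternative route, Chebyshev's sum inequality applied to the nonincreasing sequence $(p_i)$ and the increasing sequence $(i)$ gives directly
\begin{equation*}
\frac{1}{K}\sum_{i=1}^K i p_i \le \Bigl(\frac{1}{K}\sum_{i=1}^K i\Bigr)\Bigl(\frac{1}{K}\sum_{i=1}^K p_i\Bigr) = \frac{K+1}{2K}.
\end{equation*}
No step is a real obstacle. The only point needing care is the averaging argument for $S_m \ge m/K$, which I would write out explicitly; it uses only the ordering assumption.
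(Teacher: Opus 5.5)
Your proof is correct. Your ``alternative route'' is exactly the paper's argument. The paper checks that the uniform vector gives $(K+1)/2$, then applies Chebyshev's sum inequality to the nonincreasing $(p_i)$ and the increasing $(i)$ to get $\sum_{i=1}^K i p_i \le \bigl(\frac{1}{K}\sum_{i=1}^K i\bigr)\bigl(\sum_{i=1}^K p_i\bigr) = (K+1)/2$.

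Your main route is different in presentation but closely related. You use Abel summation, $\sum_{i=1}^K i p_i = K - \sum_{m=1}^{K-1} S_m$, together with the partial-sum bound $S_m \ge m/K$. The latter says that a nonincreasing probability vector majorizes the uniform one. This amounts to a self-contained proof of the special case of Chebyshev's inequality that is needed here.

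The paper's version is a one-line citation of a known inequality. Yours avoids the black box and shows exactly where the ordering hypothesis enters. It also gives the equality case at once: equality forces $S_m = m/K$ for every $m$, hence $p_i = 1/K$ for all $i$, so the uniform vector is the unique maximizer. (Chebyshev's equality condition yields the same conclusion, but only if you invoke it.)

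When you write out $S_m \ge m/K$, state the final step explicitly. Note that $\frac{1}{K} = \frac{m}{K}\cdot\frac{S_m}{m} + \frac{K-m}{K}\cdot\frac{1-S_m}{K-m}$ is a convex combination of the two block averages. Since the first block average dominates the second, it must be at least $1/K$. This is needed only for $1 \le m \le K-1$, which is all your sum uses.
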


\begin{proof}
    When $p_1=p_2=\cdots=p_K={1}/{K}$, we have
    \begin{equation*}
        \sum_{i=1}^Kip_i=\frac{K+1}{2}.
    \end{equation*}
    Moreover, by Chebyshev's sum inequality, it follows that
    \begin{align*}
        \sum_{i=1}^Kip_i
        &\le\left(\frac{1}{K}\sum_{i=1}^Ki\right)\left(\sum_{i=1}^Kp_i\right)\\
        &=\frac{K+1}{2}
    \end{align*}
    This completes the proof.
\end{proof}

Hereafter, we regard the right-hand side of \eqref{ueq:upper} as a function of the hyperparameters $(d,\alpha)$ and refer to it as the upper bound function, denoted by $f(d,\alpha)$.
By Proposition~\ref{prop:upper}, the cross entropy between the true probability vector $\bm{p}$ and its approximation $\bm{q}$ can be explicitly bounded by $f(d,\alpha)$, providing practical guidance for the selection of the hyperparameters $(d,\alpha)$.
Importantly, the bound holds without imposing any additional structural assumptions on the true distribution $\bm{p}$, other than the finiteness of its dimension.

To evaluate the upper bound function in concrete settings, we fix the discount parameter $d$ and examine the behavior of the Kullback--Leibler divergence $D_{KL}(\bm{p}||\bm{q})$ and the difference between the cross-entropy upper bound and the true entropy, $f(d,\alpha)-H(\bm{p})$, as functions of $\alpha \in [0,1000)$.
These quantities are shown in Figure~\ref{fig:KL}.
The left column corresponds to the case $d=0$, while the right column corresponds to $d=0.5$.
In the top row, the true probability vector $\bm{p}$ is drawn from a $\mathrm{Dir}_{500}(0.1)$, whereas in the bottom row, $\bm{p}$ is drawn from a $\mathrm{Dir}_{500}(10)$.
Here, $\mathrm{Dir}_M(\alpha)$ denotes the $M$-dimensional symmetric Dirichlet distribution with the parameter $\alpha > 0$}.

\begin{figure}[tbp]
    \centering
    \begin{minipage}{0.45\textwidth}
        \centering
        \includegraphics[width=\textwidth]{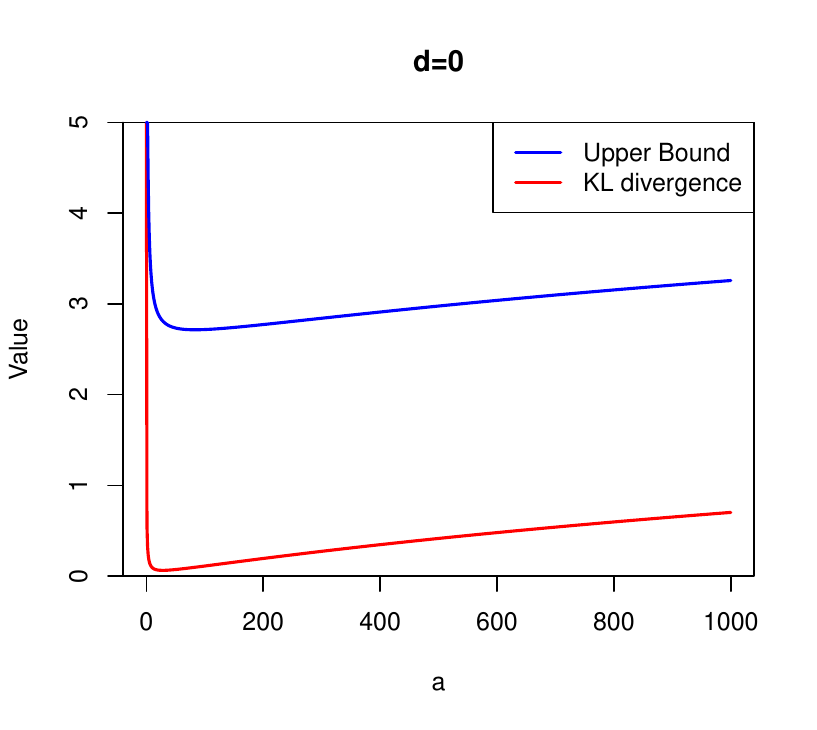}
    \end{minipage}%
    \hspace{1em}
    \begin{minipage}{0.45\textwidth}
        \centering
        \includegraphics[width=\textwidth]{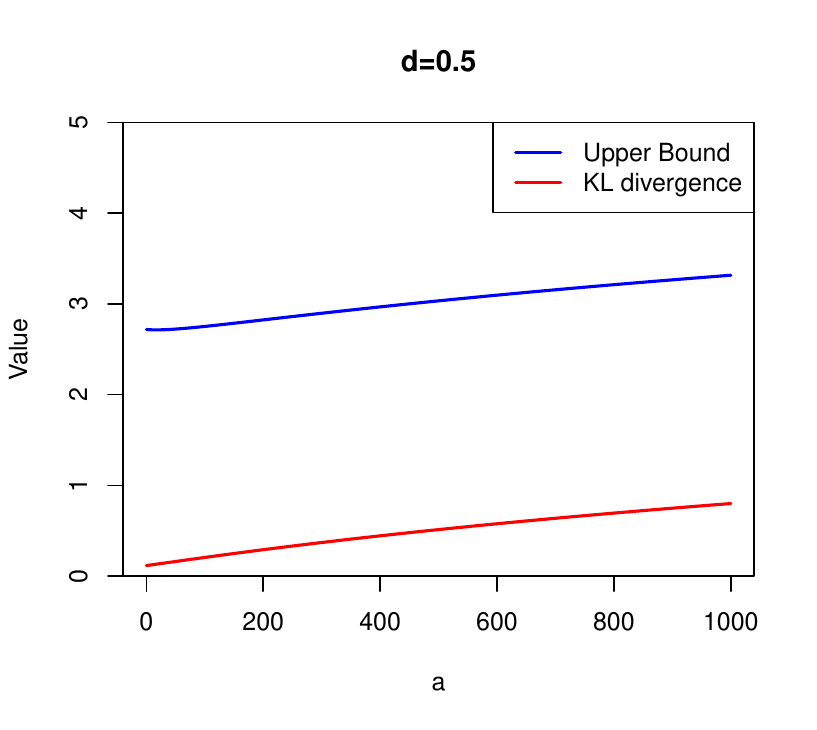}
    \end{minipage}%
    \vspace{1em}
    \begin{minipage}{0.45\textwidth}
        \centering
        \includegraphics[width=\textwidth]{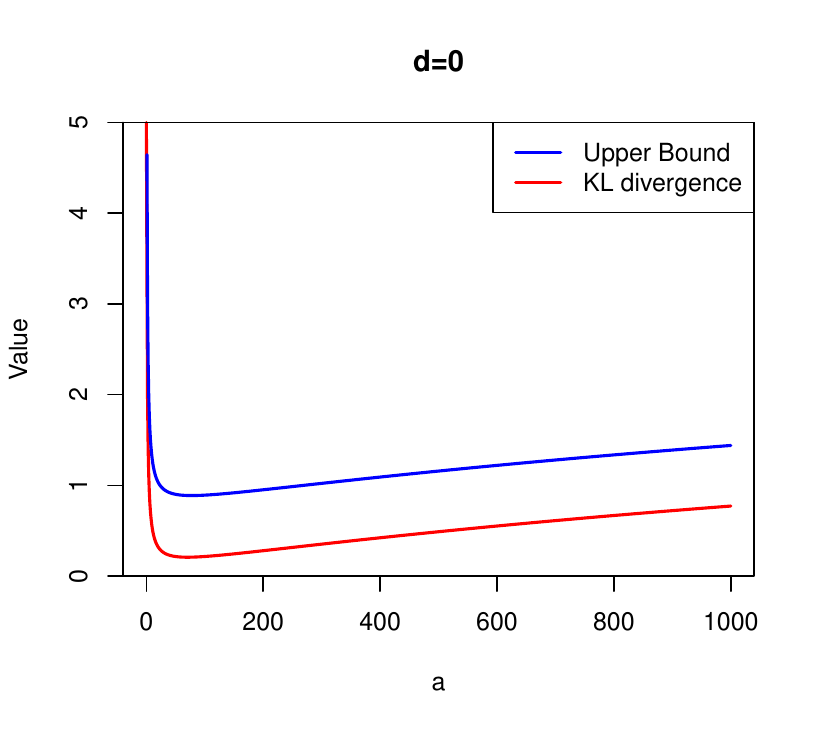}
    \end{minipage}%
    \hspace{1em}
    \begin{minipage}{0.45\textwidth}
        \centering
        \includegraphics[width=\textwidth]{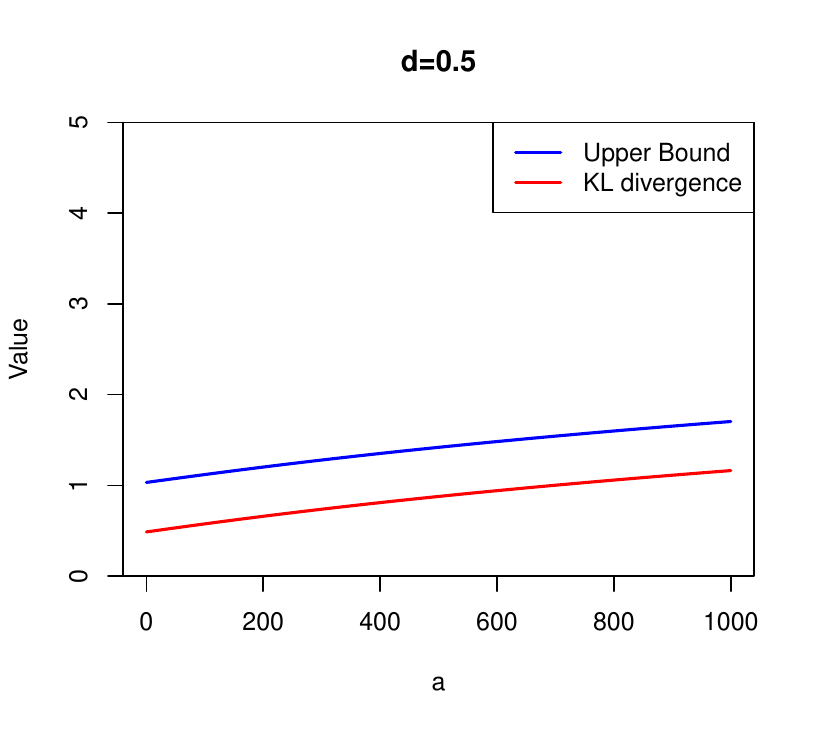}
    \end{minipage}%
    \caption{Behavior of the Kullback--Leibler divergence $D_{KL}(\bm{p}||\bm{q})$ and the difference between the cross-entropy upper bound and the true entropy, $f(d,\alpha)-H(\bm{p})$, as functions of $\alpha \in [0,1000)$.
    The discount parameter is fixed at $d=0$ (left column) and $d=0.5$ (right column).
    In the top row, the true probability vector $\bm{p}$ is drawn from $\mathrm{Dir}_{500}(0.1)$, while in the bottom row $\bm{p}$ is drawn from $\mathrm{Dir}_{500}(10)$.}
  \label{fig:KL}
\end{figure}

\subsubsection{Selection in the small-sample regime}
As shown in Figure~\ref{fig:KL}, for a fixed value of $d$, the cross entropy and its upper bound function exhibit similar behavior as functions of $\alpha$.
This observation motivates the use of the upper bound function $f(d,\alpha)$ as a proxy for the cross entropy.
Thus, we propose to select the hyperparameters $(d,\alpha)$ by minimizing $f(d,\alpha)$, and denote the resulting values by $(\hat d,\hat\alpha)$.

\begin{thm}\label{thm:critical}
    If the right-hand side of \eqref{ueq:upper} denoted by $f(d,\alpha)$ has critical points, then the critical points are given by
    \begin{align*}
        & d =\frac{-b\pm\sqrt{b^2-4ac}}{2a},\\
        & \alpha =\frac{T(1-d)}{C_{01}}-N,
    \end{align*}
    where $C_{01},F,a,b,c$ are defined as 
    \begin{align*}
        &C_{01}=C_0+C_1,\\
        &F=\frac{C_{0}}{2}(K-T+1),\\
        &a=(C_{01}-1)^{2}T^{2},\\
        &b=\{(2T-2C_{01}N+C_{01})(C_{01}-1)+FC_{01}\}T,\\
        &c=(T-C_{01}N+C_{01})(T-C_{01}N)-TFC_{01}.
    \end{align*}
\end{thm}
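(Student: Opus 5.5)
The plan is to compute the two partial derivatives of $f$, set them to zero, and eliminate $\alpha$; this should reduce the system to a quadratic in $d$ whose coefficients are the stated $a,b,c$. With $C_{01}=C_0+C_1$ and $F=\frac{C_0}{2}(K-T+1)$, write
\begin{equation*}
f(d,\alpha)=\log(N+\alpha)-C_{01}\log(1-d)+F\bigl\{\log(u+1)-\log u\bigr\},\qquad u:=\alpha+Td .
\end{equation*}
Here $C_0$, $C_1$, $K$, $T$, $N$ are treated as constants, since they do not depend on $(d,\alpha)$. Because $\partial u/\partial\alpha=1$ and $\partial u/\partial d=T$, and $\frac{1}{u+1}-\frac{1}{u}=-\frac{1}{u(u+1)}$, differentiation gives
\begin{equation*}
\frac{\partial f}{\partial \alpha}=\frac{1}{N+\alpha}-\frac{F}{u(u+1)},\qquad
\frac{\partial f}{\partial d}=\frac{C_{01}}{1-d}-\frac{FT}{u(u+1)} .
\end{equation*}

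At a critical point both derivatives vanish, so
\begin{equation*}
\frac{F}{u(u+1)}=\frac{1}{N+\alpha}=\frac{C_{01}}{T(1-d)} .
\end{equation*}
The second equality gives $T(1-d)=C_{01}(N+\alpha)$, which is the stated formula $\alpha=T(1-d)/C_{01}-N$. This step assumes $C_{01}>0$; if $C_{01}=0$, the $d$-derivative equation has no solution and there is no critical point.

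Next, substitute this $\alpha$ into the remaining equation $u(u+1)=F(N+\alpha)=FT(1-d)/C_{01}$. To clear denominators, set $w:=C_{01}u$. Then
\begin{equation*}
w=A+Bd,\qquad A:=T-C_{01}N,\qquad B:=(C_{01}-1)T,
\end{equation*}
and the equation becomes $w(w+C_{01})=FTC_{01}(1-d)$. Expanding $(A+Bd)(A+C_{01}+Bd)$ and moving the right-hand side to the left gives
\begin{equation*}
B^2d^2+\bigl\{B(2A+C_{01})+FTC_{01}\bigr\}d+A(A+C_{01})-FTC_{01}=0 .
\end{equation*}
Substituting back $A$ and $B$, these coefficients are exactly the $a$, $b$, $c$ of the statement. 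The quadratic formula then yields the stated roots for $d$, and each root determines $\alpha$ through the relation above.

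The main obstacle is only bookkeeping: choosing the scaling $w=C_{01}u$ so that the coefficients come out in the stated form, and recording the degenerate case. The quadratic formula requires $a\neq0$, that is, $C_{01}\neq1$. If $C_{01}=1$, the equation is linear in $d$. The hypothesis that critical points exist covers both degenerate cases ($C_{01}=0$ and $C_{01}=1$), as well as whether the roots land in the admissible region $d\in[0,1)$, $\alpha>-d$.
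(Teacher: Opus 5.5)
Your proposal is correct and follows essentially the paper's argument: compute both partial derivatives, take their ratio to get the critical point condition $\alpha=T(1-d)/C_{01}-N$, and reduce to $ad^2+bd+c=0$. Your coefficient bookkeeping via $w=C_{01}u=A+Bd$ checks out.

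The only difference is cosmetic. The paper substitutes the condition into $f$ and differentiates the one-variable profile, which introduces the prefactor $(C_{01}-1)/\{(1-d)w(w+C_{01})\}$. You substitute it into the remaining equation $\partial f/\partial\alpha=0$, which is slightly more direct.

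One small slip in your degenerate-case remark: if $C_{01}=0$ then $C_0=0$, so $F=0$ and $\partial f/\partial d\equiv 0$. It is $\partial f/\partial\alpha=1/(N+\alpha)\neq0$ that rules out critical points in that case.
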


\begin{proof}
The partial derivatives of the upper bound function $f(d,\alpha)$ with respect to $\alpha$ and $d$ are given by
\begin{align}
    & \frac{\partial f}{\partial \alpha} = \frac{1}{N+\alpha}+F\left(\frac{1}{\alpha+Td+1}-\frac{1}{\alpha+Td}\right) , \label{eq:hen1}\\
    & \frac{\partial f}{\partial d} = \frac{C_{01}}{1-d}+TF\left(\frac{1}{\alpha+Td+1}-\frac{1}{\alpha+Td}\right).\label{eq:hen2}
\end{align}
If $(d,\alpha)$ is a critical point of $f$, which means that $\partial f/\partial\alpha=\partial f/\partial d=0$, then $\alpha$ and $d$ satisfy
\begin{equation}
    \alpha = \frac{T(1-d)}{C_{01}}-N\label{eq:a_and_d}.
\end{equation}
Hereafter, we refer to \eqref{eq:a_and_d} as the critical point condition. 
If $f(d,\alpha)$ has a critical point on the domain $\alpha > -d$ and $d \in [0,1)$,  then $f(d)$ has a critical point in the interval
\begin{equation*}
    0\le d<\frac{T-C_{01}N}{T-C_{01}}.
\end{equation*}
In other words, for $(d,\alpha)$ to have a critical point within this interval, it is necessary that $C_{01} < T/N$.

Under the critical point condition, $f(d,\alpha)$ can be rewritten as
\begin{equation}
    f(d)=(1-C_{01})\log(1-d)+\log\frac{T}{C_{01}}+
    F\log\frac{(C_{01}-1)Td+T-C_{01}N+C_{01}}{(C_{01}-1)Td+T-C_{01}N}\label{eq:upd}.
\end{equation}
Differentiating \eqref{eq:upd} with respect to $d$ gives
\begin{equation*}
    \frac{\mathrm{d} f}{\mathrm{d} d}=\frac{C_{01}-1}{(1-d)\{(C_{01}-1)Td+T-C_{01}N+C_{01}\}\{(C_{01}-1)Td+T-C_{01}N\}}(ad^{2}+bd+c),
\end{equation*}
where
\begin{align*}
    &a =(C_{01}-1)^{2}T^{2} ,\\
    &b =\{(2T-2C_{01}N+C_{01})(C_{01}-1)+FC_{01}\}T,\\
    &c =(T-C_{01}N+C_{01})(T-C_{01}N)-TFC_{01}.
\end{align*}
Hence, the critical points are given by
\begin{equation*}
    d_{\mathrm{critical}}=\frac{-b\pm\sqrt{b^2-4ac}}{2a}.
\end{equation*}
This completes the proof.
\end{proof}

By Theorem~\ref{thm:critical}, the function$f(d,\alpha)$ has a minimum at critical points of \eqref{eq:upd} or on the boundary.
Regarding the boundary, we consider the minimum with respect to $\alpha$ for $d=0$ or for values of $d$ near $1$. 
If critical points with respect to $\alpha$ exist, then from \eqref{eq:hen1} we have
\begin{equation}
    \alpha=\frac{-(2Td-F+1)+\sqrt{(1-F)^2+4F(N-Td)}}{2}\label{eq:a_min},
\end{equation}
showing that $f(d,\alpha)$ has a minimum in $\alpha \in \mathbb{R}$. 
If $\hat{\alpha}\le-d$, we set $\hat{\alpha} = -d+\epsilon$ and regard it as the minimum on the boundary. 
Hence, $(d,\alpha)$ is selected among these candidates so that $f(\hat d, \hat \alpha)$ is minimized. 
However, since the solutions at critical points or on the boundary involve the unknown constants $K$ and the random variables $C_1$ and $C_0$, these quantities must be estimated.

\subsubsection{Estimation of $C_1$, $C_0$ and $K$}
The quantity $1-C_0$ is known as the sample coverage and has been widely studied; see, for example, \cite{ChaoLee_1992,Esty_1983,Zhang_2009}.
In this study, we employ the so-called Good--Turing estimator introduced in \cite{Good_1953}. 
Let $m_1=\sum_{i=1}^T I(n_i=1)$ denote the number of species observed exactly once that are called singletons, where $I(\cdot)$ is the indicator function.
Using $m_1$, we estimate $C_0$ as
\begin{equation}\label{eq:GT}
    \hat{C}_0=\frac{m_1}{N}.
\end{equation}
The Good--Turing estimator \eqref{eq:GT} has been theoretically studied in, for example, \cite{Esty_1983,McAllester_2000,Zhang_2009} and is considered a reliable method for estimating unseen probabilities. 
Based on \eqref{eq:GT}, we further estimate $C_1$ as
\begin{equation*}
    \hat{C}_1=(1-\hat{C}_0)\frac{m_1}{N}.
\end{equation*}
In other words, we take the proportion of the singletons to the observed probability mass as an estimate of $C_1$. 
Similarly, we estimate $K$ using \eqref{eq:GT} by
\begin{equation*}
    \hat{K}=\frac{N}{1- \hat{C}_0}.
\end{equation*} 
By substituting these estimates into the upper bound function, we obtain the corresponding values $(\hat{d},\hat{\alpha})$, which we use as the hyperparameters $(d,\alpha)$.

\begin{rem}
In this study, we use $\hat K$ for selecting $(d,\alpha)$.
However, this does not imply that the dimensionality of the model, i.e., the number of species $K$, is being fixed. 
Our model is based on a Bayesian nonparametric framework that handles infinite-dimensional probability vectors, and $\hat{K}$ is used merely as a practical aid for analysis and computation.
\end{rem}

\subsubsection{Selection in the large-sample regime}
The advantage of using DPYM is assigning positive probabilities to unobserved species. 
However, when the sample size $N$ is sufficiently large, we typically have $T=K$, meaning that all species have been observed. 
In this case, the terms corresponding to unobserved probabilities, $C_0$ and $C_1$, vanish. 
Furthermore, in this subsection, we assume that the indices of the observed counts $(n_1,\ldots,n_K)$ are aligned with the probability vector $\bm{p}=(p_1,\ldots,p_K)$. 
The upper bound function then reduces to
\[
f(d,\alpha)=\log(N+\alpha),
\]
which depends solely on $N$ and thus provides no meaningful information for determining $(d,\alpha)$. 
Hence, when $N$ is large, we adopt a strategy of directly minimizing the cross entropy to select the hyperparameters $(d,\alpha)$ rather than relying on the upper bound function.
The cross entropy $H(\bm{p},\bm{q})$ can be rewritten as
\[
    H(\bm{p},\bm{q})
    = -\sum_{i=1}^{K} p_i \log\left(\frac{n_i - d}{N + \alpha}\right).
\]
It is natural to select $(d,\alpha)$ so as to minimize the cross entropy $H(\bm{p},\bm{q})$.
The cross entropy is minimized when
\[
    \frac{n_i - d}{N + \alpha} = p_i.
\]
We can decompose this expression as
\[
    \frac{n_i - d}{N+\alpha}
    = \frac{N}{N+\alpha} \cdot \frac{n_i}{N} \cdot \frac{n_i-d}{n_i}.
\]
By the law of large numbers, $n_i/N \to^p p_i$, $i=1,\ldots,K$, so it is preferable that
\[
    \frac{N}{N+\alpha} \approx 1, \qquad
    \frac{n_i-d}{n_i} \approx 1,
\]
where $\to^p$ denotes convergence in probability. 
From the above, it is reasonable to choose $d$ and $\alpha$ as small as possible.
In practice, for a sufficiently small $\epsilon>0$, one may choose $(d,\alpha)=(\epsilon,0)$ or $(0,\epsilon)$.
Generally, since $N > n_i$, it is natural to set
\[
    d = 0.
\]
However, to avoid the correction term being excessively small and to allow incorporation of prior information when available, we consider a more general setting $(d_0,\alpha_0)$.
In this study, based on \eqref{eq:GT}, we use the presence of singletons to decide whether $N$ is sufficiently large.
Specifically, when $m_1=0$, we set $(\hat d,\hat \alpha)=(d_0,\alpha_0)$; otherwise, when $m_1>0$, we set $(\hat d,\hat \alpha)$ by using the procedure described above.

\section{Proposed estimator}\label{sec:proposed_estimator}

Based on the preparation above, we summarize the construction of the proposed estimator $\hat H$.
Let $T$ denote the number of observed species and $n_i$ the count of the $i$-th observed species for $i = 1,\ldots,T$, with $\mathbf{y}=(n_{1},\dots,n_{T})^{\top}$.
The sample size $N$ then satisfies $N = \sum_{i=1}^Tn_i$.

First,we describe the procedure for selecting hyperparameters $(d,\alpha)$.
Let $m_1=\sum_{i=1}^T I(n_i=1)$ denote the number of singletons.
If $m_1=0$, the hyperparameters are set to $(\hat d,\hat\alpha)=(d_0,\alpha_0)$, where $(d_0,\alpha_0)$ are predetermined values based on prior information or other considerations independent of the sample.
Next, we describe the selection of $(d,\alpha)$ when $m_1\neq0$.
For this purpose, we employ the following estimators for $C_0,C_1$ and $K$ given by
\[
    \hat C_0=\frac{m_1}{N},\quad
    \hat{C}_1=(1-\hat{C}_0)\frac{m_1}{N}, \quad
    \hat{K}=\frac{N}{1- \hat{C}_0}.
\]
We consider the estimated upper bound function
\begin{equation}
    \hat{f}(d,\alpha)=\log(N+\alpha)-(\hat C_{1}+\hat C_{0})\log(1-d)+\frac{\hat C_{0}}{2}(\hat K-T+1) \log\left( \frac{\alpha+Td+1}{\alpha+Td} \right)\label{eq:hat_f},
\end{equation}
and select $(d,\alpha)$ so as to minimaze $\hat{f}(d,\alpha)$.
By Theorem~\ref{thm:critical}, this gives two candidate pairs
\[
d_{\pm} = \frac{-\hat b\pm\sqrt{\hat b^2-4\hat a\hat c}}{2\hat a}, \quad
\alpha_{\pm} = \frac{T(1-d_{\pm})}{\hat C_{01}}-N,
\quad 
\]
where the signs are taken consistently.
Here,
\begin{align*}
        &\hat C_{01}=\hat C_0+\hat C_1,\quad
        \hat F=\frac{\hat C_{0}}{2}(\hat K-T+1),\\
        &\hat a=(\hat C_{01}-1)^{2}T^{2},\quad
        \hat b=\{(2T-2\hat C_{01}N+\hat C_{01})(\hat C_{01}-1)+\hat  F\hat C_{01}\}T,\\
        &\hat c=(T-\hat C_{01}N+\hat C_{01})(T-\hat C_{01}N)-T\hat F\hat C_{01}.
\end{align*}
In addition, we consider the boundary case by minimizing~\eqref{eq:hat_f} with respect to $\alpha$ when $d=0$ or when $d$ is sufficiently close to $1$.
When $d=0$, the corresponding value of $\alpha$ is
\begin{equation*}
    \alpha_0=\frac{\hat F-1+\sqrt{(1-\hat F)^2+4\hat FN}}{2}.
\end{equation*}
When $d$ is sufficiently close to $1$, i.e., $d=1-\epsilon$ for small $\epsilon>0$, the corresponding value is 
\begin{equation*}
    \alpha_1 = \frac{-(2T(1-\epsilon)-\hat F+1)+\sqrt{(1-\hat F)^2 + 4 \hat{F} \{N-T(1-\epsilon)\} }}{2}.
\end{equation*}
We define a candidate set of hyperparameters
\begin{equation*}
    \Theta':=\{(d_+,\alpha_+),(d_-,\alpha_-),(d_0,0),(1-\epsilon,\alpha_1)\},
\end{equation*}
and select the hyperparameters $(d,\alpha)$ as
\begin{equation*}
    (\hat d,\hat\alpha)=\argmin_{(d,\alpha)\in\Theta'} \hat f(d,\alpha).
\end{equation*}

Next, using the selected hyperparameters $(\hat d,\hat\alpha)$, we construct an estimator of the Shannon entropy $H(\bm{p})$ of the population probability vector $\bm{p} \in \Delta_K$.
Substituting $(\hat d,\hat\alpha)$ into \eqref{eq:qstar}, we set
\[
\hat{\bm{q}}^* =\left( \frac{n_1 - \hat{d}}{N + \hat{\alpha}} ,\dots, \frac{n_T - \hat{d}}{N  + \hat{\alpha}}, \frac{\alpha + T\hat{d}}{N + \hat{\alpha}} \right).
\]
The Shannon entropy of $\hat{\bm{q}}^* \in \Delta_{T+1}$ can be expressed as
\begin{equation*}
    H(\hat{\bm{q}}^*)=-\sum_{t=1}^T\frac{n_t-\hat d}{N+\hat\alpha}\log\left(\frac{n_t-\hat d}{N+\hat\alpha}\right)-\frac{\hat\alpha+T\hat d}{N+\hat\alpha}\log\left(\frac{\hat\alpha+T\hat d}{N+\hat\alpha}\right).
\end{equation*}
For the Shannon entropy of marginal PYP, when $\hat d=0$, we have
\begin{equation*}
    H(\hat{\bm{\pi}})=(1+\hat\alpha)\log(1+\hat\alpha)-\hat\alpha\log\hat\alpha \ ;
\end{equation*}
and when $\hat d\neq0$, we use the approximation
\begin{align*}
    H(\hat{\bm{\pi}})
    &=-\log\left(\frac{(1-\hat{d})\Gamma(\frac{\hat{\alpha}+1}{\hat{d}})}{\hat{\alpha}\Gamma(\frac{\hat{\alpha}}{\hat{d}})}\right)-\frac{(1-\hat{d})\Gamma(\frac{\hat{\alpha}+1}{\hat{d}})}{\hat{\alpha}\Gamma(\frac{\hat{\alpha}}{\hat{d}})}\sum_{k=1}^{n}\left(\frac{\Gamma(\frac{\hat{\alpha}}{\hat{d}}+k)}{\Gamma(\frac{\hat{\alpha}+1}{\hat{d}}+k)}\right)\log\left(\frac{\Gamma(\frac{\hat{\alpha}}{\hat{d}}+k)}{\Gamma(\frac{\hat{\alpha}+1}{\hat{d}}+k)}\right)\\
    &\qquad - \frac{(1-\hat{d})\Gamma(\frac{\hat{\alpha}+1}{\hat{d}})}{\hat{\alpha}\Gamma(\frac{\hat{\alpha}}{\hat{d}})} \left\{ \frac{(n+1)^{\frac{\hat{d}-1}{\hat{d}}}}{\hat{d}-1}\log(n+1)-\frac{\hat{d}(n+1)^{\frac{\hat{d}-1}{\hat{d}}}}{(\hat{d}-1)^2} \right\}
\end{align*}
for sufficiently large $n$.
Using $H(\hat{\bm{q}}^*)$ and $H(\hat{\bm{\pi}})$ described above, we propose the estiamtor $\hat{H}$ of $H(\bm{p})$ given by
\begin{equation*}
    \hat H=H(\hat{\bm{q}}^*)+\frac{\hat\alpha+T\hat d}{N+\hat\alpha}H(\hat{\bm{\pi}}).
\end{equation*}

\begin{rem}
In the proposed method, we estimate the entropy $H(\bm{p})$ of a population distribution $\bm{p} \in \Delta_K$ using a model distribution $\bm{q} \in \Delta_{\infty}$ that does not coincide with the true distribution.
The proposed estimator selects hyperparameters to mitigate the effect of the inherent model mismatch, in order to maintain good estimation accuracy in both small-sample, high-dimensional regimes and large-sample scenarios.
As we show later, the estimator is consistent as $N\to\infty$.
\end{rem}

\begin{rem}
Archer et al.~\cite{Archer2014} derived the posterior mean of the Shannon entropy $\mathsf{E}[H \mid \mathbf{y}, d, \alpha]$ based on the DPYM.
They further assumed a prior distribution with density $\tilde{p}$ over $(d,\alpha)$ and proposed the estimator
\[
\hat{H}^{\mathrm{PYM}}
= \int \mathsf{E}[H \mid \mathbf{y}, d, \alpha]\,
\frac{p(\mathbf{y}\mid d,\alpha) \tilde{p}(d,\alpha)}{p(\mathbf{y})}
\mathrm{d}(d,\alpha),
\]
where
\begin{align*}
& p(\mathbf{y}\mid d,\alpha)
= \frac{\left\{ \prod_{l=1}^{T-1}(\alpha+ld) \right\}\left\{ \prod_{i=1}^{T}\Gamma(n_i-d)\right\} \Gamma(1+\alpha)}{\Gamma(1-d)^T\Gamma(\alpha+N)}, \\
& p(\mathbf{y}) = \int p(\mathbf{y}\mid d,\alpha) \tilde{p}(d,\alpha) \mathrm{d}(d,\alpha).
\end{align*}
The estimator $\hat{H}^{\mathrm{PYM}}$is known to behave unstably when all observed species are singletons \cite{Archer2014}, and it is computationally expensive due to the required numerical integration.
Moreover, the choice of prior with respect to $(d,\alpha)$ can strongly affect the results, and it is generally difficult to specify a suitable prior in advance.
\end{rem}

\section{Finiteness of entropy for regularly varying distributions}\label{sec:finiteness}
\subsection{Regularly varying function and regularly varying distribution}

The proposed method is designed for settings in which the sample size $N$ is relatively small compared to the true number of species $K$. 
To ensure that the method retains a certain level of accuracy when $N$ is sufficiently large, we establish its consistency in the large-sample regime. 
The concepts of regularly varying functions and regularly varying distributions are used to prove this result, which are needed to verify the finiteness of the entropy.
We summarize the relevant fundamentals in this subsection.
For further details on regularly varying distributions, see Nair et al.~\cite{Nair_2022}, Resnick~\cite{Resnick_2007}, and Bingham et al.~\cite{Bingham_1987}.

\begin{dfn}[\cite{Nair_2022}]
A function $f:\mathbb{R}_+\rightarrow\mathbb{R}_+$ is regularly varying of index $\rho\in\mathbb{R}$, denoted $f\in\mathcal{RV}(\rho)$, if for any $y>0$,
\begin{equation*}
    \lim_{x\rightarrow\infty}\frac{f(xy)}{f(x)}=y^\rho.
\end{equation*}
In particular, when $\rho = 0$, the function $f$ is called slowly varying.
\end{dfn}
A function $f$ is regularly varying with index $\rho$ if and only if it can be represented as $f(x)=x^\rho L(x)$ $(x \in \mathbb{R}_+)$ where $L$ is a slowly varying function~\cite{Nair_2022}.

\begin{dfn}
A random variable $Z$ is said to follow a regularly varying distribution with index $\rho$ if its complementary cumulative distribution function or tail probability $\bar F(z)=P(Z\ge z)$ $(z \geq 0)$ is a regularly varying function with index $\rho$.
\end{dfn}

We briefly review the uniform convergence theorem and the monotone density theorem, which are used in the proof of Proposition~\ref{prop:PMF} below.

\begin{thm}[Uniform convergence theorem~\cite{Bingham_1987}]\label{thm:UCT}
    Let $L$ be a slowly varying function. Then, for any compact set $\Lambda \subset (0, \infty)$,
    \begin{equation*}
        \lim_{x\to\infty}\sup_{\lambda\in\Lambda}\left|\frac{L(\lambda x)}{L(x)}-1\right|=0.
    \end{equation*}
\end{thm}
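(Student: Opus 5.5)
The statement is the Uniform Convergence Theorem for slowly varying functions. We may assume $L$ is measurable; this is the standing convention in Bingham et al., and without it the theorem fails. Since $\Lambda$ is compact in $(0,\infty)$, it lies inside some interval $[a,b]$ with $0<a\le b<\infty$, so it suffices to prove uniform convergence on $[a,b]$. Pass to additive notation: set $h(u)=\log L(e^u)$. Slow variation of $L$ then says $h(u+s)-h(u)\to 0$ as $u\to\infty$ for every fixed $s\in\mathbb{R}$. The claim becomes
\[
\sup_{s\in I}|h(u+s)-h(u)|\to 0, \qquad I=[\log a,\log b].
\]
Because $|y-1|\le e^{|\log y|}-1$ for $y>0$, uniform smallness of $|\log(L(\lambda x)/L(x))|$ gives uniform smallness of $|L(\lambda x)/L(x)-1|$. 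It is therefore enough to work with $h$.

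The argument goes by contradiction, following Egorov/Steinhaus-type reasoning. Suppose uniformity fails. Then there exist $\varepsilon>0$, points $u_n\to\infty$ and $s_n\in I$ with $|h(u_n+s_n)-h(u_n)|\ge \varepsilon$. By compactness we may assume $s_n\to s_0\in I$. Now define the sets
\[
A_n=\{t\in[0,J]: |h(u_n+t)-h(u_n)|<\varepsilon/2\},\qquad B_n=\{t\in[0,J]: |h(u_n+s_n+t)-h(u_n+s_n)|<\varepsilon/2\},
\]
where $J$ is a fixed interval length, say $J=1+|\log a|+|\log b|$. These sets are measurable because $h$ is.

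Pointwise convergence gives $1_{\{|h(u+t)-h(u)|<\varepsilon/2\}}\to 1$ as $u\to\infty$ for each $t$. Applying dominated convergence along any sequence tending to infinity yields $|A_n|\to J$ and $|B_n|\to J$. Here one has to be careful: the point is that for each fixed $t$, $h(u_n+t)-h(u_n)\to0$, and likewise with $u_n$ replaced by $u_n+s_n$, which also tends to infinity. Next consider the shifted set $B_n+s_n$, which has the same measure. Since $A_n$ and $B_n+s_n$ both sit inside an interval of length at most $J+|s_n|$, and each has measure tending to $J$, they must intersect for large $n$ provided $|s_n|$ is bounded by something less than $J$ (our choice of $J$ ensures this). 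Take a point $t$ in both $A_n$ and $B_n+s_n$. Then
\[
|h(u_n+s_n)-h(u_n)|\le|h(u_n+t)-h(u_n)|+|h(u_n+t)-h(u_n+s_n)|<\varepsilon,
\]
using $t-s_n\in B_n$ for the second term. This contradicts the choice of $u_n$ and $s_n$.

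The main obstacle is the measure step, in particular the claim that $|A_n|\to J$. It requires measurability of $h$ and the dominated convergence theorem applied along the sequence $u_n$. The intersection argument also needs a bit of care: both sets should be placed in a common window, and the overlap should be checked from the measure count $|A_n|+|B_n|>J+|s_n|$. With those two points in place, the rest of the proof is routine.
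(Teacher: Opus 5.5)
Your proof is correct and complete. The paper does not prove this statement; it only cites Bingham, Goldie and Teugels, and your argument is essentially the standard measure-theoretic proof of the Uniform Convergence Theorem given there. That proof works in additive form, uses dominated convergence along $u_n$ and $u_n+s_n$ to get $|A_n|,|B_n|\to J$, and then uses the measure count $|A_n|+|B_n|>J+|s_n|$ to force $A_n\cap(B_n+s_n)\neq\emptyset$.

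You are also right to insist on measurability. Bingham et al.\ build measurability into their definition of slow variation, whereas the paper's definition, taken from Nair et al., omits it.
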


\begin{thm}[Monotone density theorem~\cite{Nair_2022}]\label{thm:monotone}
Suppose that the function $f$ is absolutely continuous with derivative $f'$. 
If $f\in\mathcal{RV}(\rho)$ and $f'$ is eventually monotone, then $xf'(x)\sim\rho f(x)$.
Moreover, if $\rho\neq0$, then $|f'|\in\mathcal{RV}(\rho-1)$.
\end{thm}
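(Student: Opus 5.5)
Since this result is quoted from \cite{Nair_2022}, I would give the classical sandwich argument directly, using only the definition of $\mathcal{RV}(\rho)$ and the monotonicity of $f'$. The idea is to compare $f(xy)-f(x)=\int_x^{xy}f'(t)\,\mathrm{d}t$ with the values of $f'$ at the endpoints. Absolute continuity makes this identity available.

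Assume first that $f'$ is eventually nonincreasing, and fix $y>1$. For all sufficiently large $x$, monotonicity gives
\[
(y-1)\,x f'(xy)\le f(xy)-f(x)\le (y-1)\,x f'(x).
\]
Dividing by $f(x)>0$ and using $f(xy)/f(x)\to y^\rho$, the right-hand inequality yields
\[
\liminf_{x\to\infty}\frac{xf'(x)}{f(x)}\ge \frac{y^\rho-1}{y-1}.
\]
For the matching upper bound, I would apply the same reasoning on the interval $[x/y,x]$. There monotonicity gives $f(x)-f(x/y)\ge (1-1/y)\,x f'(x)$. Since $f(x/y)/f(x)\to y^{-\rho}$, this yields
\[
\limsup_{x\to\infty}\frac{xf'(x)}{f(x)}\le \frac{1-y^{-\rho}}{1-1/y}.
\]
Now let $y\downarrow 1$. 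Both bounds converge to the derivative of $y\mapsto y^{\rho}$ at $y=1$, which is $\rho$, so $xf'(x)/f(x)\to\rho$.

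The eventually nondecreasing case is identical with all inequalities reversed. Equivalently, one can note that the endpoint bounds simply swap roles.

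For the second claim, suppose $\rho\neq 0$. Then $|f'(x)|\sim|\rho|f(x)/x$, so
\[
\frac{|f'(xy)|}{|f'(x)|}=\frac{|\rho|f(xy)/(xy)\,(1+o(1))}{|\rho|f(x)/x\,(1+o(1))}\to y^{\rho-1}.
\]
This shows $|f'|\in\mathcal{RV}(\rho-1)$. The uniform convergence theorem (Theorem~\ref{thm:UCT}) is not needed here, because only pointwise limits at fixed $y$ are used.

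The only delicate point is the order of limits: first $x\to\infty$ with $y$ fixed, then $y\downarrow1$. When $\rho=0$, the same bounds still give $xf'(x)=o(f(x))$, which matches the statement $xf'(x)\sim\rho f(x)$ read in the degenerate sense. This is also why the regular-variation conclusion is restricted to $\rho\neq0$.
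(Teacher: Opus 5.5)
Your proof is correct. The endpoint bounds for $\int_x^{xy} f'$ and $\int_{x/y}^{x} f'$, taken as $x\to\infty$ with $y$ fixed and then $y\downarrow 1$, give $xf'(x)/f(x)\to\rho$ in both monotone cases, including $\rho=0$ read as $xf'(x)=o(f(x))$. The second claim then follows directly from $|f'(x)|\sim|\rho|f(x)/x$.

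The paper gives no proof of its own; it quotes the result from \cite{Nair_2022}. Your argument is the classical proof of the monotone density theorem, as in \cite{Bingham_1987}.
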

Here, a function $g$ is said to be eventually monotone if there exists $x_0>0$ such that $g$ is monotone over $[x_0,\infty]$.
Similarly, if the corresponding condition holds for a sequence, we say that the sequence is eventually monotone.

In the following proposition, we show that, under suitable assumptions, the probability mass function of a discrete regularly varying distribution is a regularly varying function.
\begin{prop}\label{prop:PMF}
Let $\mathrm{D}$ be a discrete regularly varying distribution on $\mathbb{N}$ with index $-\rho$. 
Let $\bm{p} = (p_1, p_2, \ldots) \in \Delta_{\infty}$ be defined by $p_k = \mathrm{D}(k)$ for $k = 1,2,\ldots$.
Assume that there exists $n_0 \in \mathbb{N}$ such that $p_{n_0}\ge p_{n_0+1}\ge\ldots$.
Then the function $f$ defined by 
\begin{equation}\label{eq:funk_from_pmf}
f(t) = \sum_{i=1}^\infty p_iI\{i-1< t\le i\}  \quad (t \in \mathbb{R}_+)
\end{equation}
is regularly varying with index $-(\rho+1)$.
\end{prop}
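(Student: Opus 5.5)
The plan is to pass from the tail function to its "derivative" by a monotone-density argument carried out directly on the discrete sequence, since $f$ in \eqref{eq:funk_from_pmf} is a step function and Theorem~\ref{thm:monotone} cannot be applied verbatim. Write $\bar F(z)=\sum_{i\ge \lceil z\rceil} p_i$, which by assumption lies in $\mathcal{RV}(-\rho)$; I assume $\rho>0$, as the case of interest. For integer $k\ge 1$ we have $f(t)=p_k$ on $(k-1,k]$, and
\[
\bar F(k)-\bar F(\lceil \lambda k\rceil)=\sum_{i=k}^{\lceil\lambda k\rceil-1}p_i .
\]
The eventual monotonicity $p_{n_0}\ge p_{n_0+1}\ge\cdots$ is what will let us sandwich $p_k$ between averages of such block sums.

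The key steps are as follows. First, for $\lambda>1$ and $k\ge n_0$, monotonicity gives
\[
(\lceil\lambda k\rceil-k)\,p_{\lceil\lambda k\rceil}\le \bar F(k)-\bar F(\lceil\lambda k\rceil)\le (\lceil\lambda k\rceil-k)\,p_k .
\]
Second, we divide by $k\bar F(k)$ and use regular variation. Since $\lceil\lambda k\rceil/k\to\lambda$, Theorem~\ref{thm:UCT} applied to the slowly varying part of $\bar F$ justifies replacing $\bar F(\lceil\lambda k\rceil)/\bar F(k)$ by $\lambda^{-\rho}$. This yields
\[
\liminf_{k\to\infty}\frac{k p_k}{\bar F(k)}\ge \frac{1-\lambda^{-\rho}}{\lambda-1}.
\]
Letting $\lambda\downarrow1$ gives $\liminf_k kp_k/\bar F(k)\ge\rho$. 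Third, we use the analogous bound with $\lambda<1$, namely
\[
\bar F(\lfloor\lambda k\rfloor)-\bar F(k)\ge (k-\lfloor\lambda k\rfloor)\,p_k ,
\]
which gives $\limsup_k kp_k/\bar F(k)\le (\lambda^{-\rho}-1)/(1-\lambda)\to\rho$ as $\lambda\uparrow1$. Together these establish the discrete monotone density statement $p_k\sim \rho\, k^{-1}\bar F(k)$ as $k\to\infty$ through the integers.

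Finally, we transfer this to the continuous variable. For $t\in(k-1,k]$ we have $f(t)=p_k\sim\rho k^{-1}\bar F(k)$. The function $g(t)=\rho t^{-1}\bar F(t)$ lies in $\mathcal{RV}(-(\rho+1))$, and $g(k)/g(t)\to1$ uniformly over $t\in(k-1,k]$, again by Theorem~\ref{thm:UCT}, because $k/t\to1$. Hence $f(t)\sim g(t)$ as $t\to\infty$. Since asymptotic equivalence preserves regular variation (the ratio $f(ty)/f(t)=\frac{f(ty)}{g(ty)}\cdot\frac{g(ty)}{g(t)}\cdot\frac{g(t)}{f(t)}\to y^{-(\rho+1)}$), we conclude $f\in\mathcal{RV}(-(\rho+1))$.

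The main obstacle is the second and third steps, where three things have to be controlled at once: the integer roundings $\lceil\lambda k\rceil$, the fact that $\bar F$ is only evaluated at integers while regular variation is a statement over reals, and the need for uniformity. I would handle these by noting that $\bar F$ is constant on each interval $(k-1,k]$, so $\bar F(\lceil\lambda k\rceil)=\bar F(\lambda k)$ exactly when $\lambda k$ is not an integer and differs only through a neighboring integer otherwise. Theorem~\ref{thm:UCT} on a compact neighborhood of $\lambda$ then absorbs the discrepancy.
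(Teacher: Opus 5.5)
Your proof is correct, but it follows a different route from the paper's.

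\textbf{How the two routes differ.} The paper never works with the sequence directly. It builds an absolutely continuous surrogate $\tilde F(z)=p_{\lfloor z\rfloor}(\lceil z\rceil-z)+\bar F(\lceil z\rceil)$ for the tail and shows $\tilde F\sim\bar F$, so $\tilde F\in\mathcal{RV}(-\rho)$. It then notes that $-f$ is the a.e.\ derivative of $\tilde F$ and applies Theorem~\ref{thm:monotone} as a black box. You instead prove a discrete monotone density theorem by hand. You sandwich $p_k$ between the block sums $\bar F(k)-\bar F(\lceil\lambda k\rceil)$ and $\bar F(\lfloor\lambda k\rfloor)-\bar F(k)$ and let $\lambda\to1$; this is the standard proof of that theorem, moved to the integers.

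\textbf{What each buys.} The paper's route is shorter but needs more bookkeeping. As written, its surrogate equals $\int_{z-1}^\infty f(t)\,dt$ rather than the claimed $\int_z^\infty f(t)\,dt$. This unit shift is harmless for regular variation but still needs a line of justification. Your route is elementary and self-contained, and it gives the sharper statement $p_k\sim\rho\,\bar F(k)/k$.

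\textbf{The assumption $\rho>0$.} Restricting to $\rho>0$ costs you nothing relative to the paper. The paper uses the second part of Theorem~\ref{thm:monotone}, which silently requires $\rho\neq0$. At $\rho=0$ your sandwich only gives $kp_k=o(\bar F(k))$, which does not force $f\in\mathcal{RV}(-1)$; one can construct monotone examples where the conclusion fails.

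\textbf{Unneeded appeals to Theorem~\ref{thm:UCT}.} Several of your uses of the uniform convergence theorem can be dropped. With your convention $\bar F(z)=\sum_{i\ge\lceil z\rceil}p_i$, you have $\bar F(\lceil\lambda k\rceil)=\bar F(\lambda k)$ for every $\lambda k$, integer or not. Likewise $g(k)/g(t)=t/k$ holds exactly for $t\in(k-1,k]$. Only the $\lfloor\lambda k\rfloor$ step genuinely needs an argument, and the sandwich $\bar F(\lambda k)\le\bar F(\lfloor\lambda k\rfloor)\le\bar F(\lambda' k)$, for fixed $\lambda'<\lambda$ and large $k$, suffices there.
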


\begin{proof} 
    The tail distribution of $\mathrm{D}$, defined by $\bar F(z)=\sum_{i\ge z} p_i$ $(z \geq 0)$, is a step function and by assumption, satisfies $\bar F \in \mathcal{RV}(-\rho)$.
    Define the function $\tilde F$ by
    \begin{align*}
        \tilde F(z)
        &=p_{\lfloor z\rfloor}(\lceil z\rceil-z)+\sum_{i=\lceil z\rceil}^\infty p_i\\
        &=p_{\lfloor z\rfloor}(\lceil z\rceil-z)+\bar F(\lceil z\rceil) \quad (z \geq 0).
    \end{align*}
    Here, $\lfloor\cdot\rfloor$ and $\lceil \cdot\rceil$ denote the floor and ceiling functions, respectively.
    
    We first show that $\tilde F \in \mathcal{RV}(-\rho)$.
    For any $z \ge 0$, we have
    \begin{align*}
        \frac{\tilde F(z)}{\bar F(z)}
        &=\frac{p_{\lfloor z\rfloor}(\lceil z\rceil-z)+\bar F(\lceil z\rceil)}{\bar F(z)}\\
        &\le\frac{p_{\lfloor z\rfloor}+\bar F(z)}{\bar F(z)}\\
        &=\frac{\bar F(\lfloor z\rfloor)-\bar F(\lfloor z\rfloor+1)+\bar F(\lceil z\rceil)}{\bar F(z)}\\
        &\le\frac{\bar F(z-1)}{\bar F(z)}.
    \end{align*}
    From Theorem~\ref{thm:UCT}, it follows that
    \[
    \lim_{z\to\infty}\frac{\bar F((1-1/z)z)}{\bar F(z)}=1.
    \]
    Therefore, it holds that
    \[
    \lim_{z\to\infty}\frac{\tilde F(z)}{\bar F(z)}=1.
    \]
    For any $\lambda>0$, we thus have
    \begin{align*}
        \lim_{z\to\infty}\frac{\tilde F(\lambda z)}{\tilde F(z)}
        &=\lim_{z\to\infty}\frac{\tilde F(\lambda z)}{\bar F(\lambda z)}
        \frac{\bar F(\lambda z)}{\bar F(z)}
        \frac{\bar F(z)}{\tilde F(z)}\\
        &=\lambda^{-\rho} .
    \end{align*}
    Hence, $\tilde F\in\mathcal{RV}(-\rho)$.

    Next, we show that the function $f$ defined in \eqref{eq:funk_from_pmf} belongs to $\mathcal{RV}(-(\rho+1))$.
    Note that
    \[
    \tilde F(z)=\int_z^\infty f(t)dt.
    \]
    Since $\tilde F(0) = 1$, the function $\tilde F$ is absolutely continuous and satisfies $\tilde F'(x) = -f(x)$ almost everywhere.
    By the assumption that the sequence $\{p_n\}_{n \ge 1}$ is eventually monotone, the function $-f$ is also eventually monotone.
    Therefore, by Theorem~\ref{thm:monotone}, it follows that $f \in \mathcal{RV}(-\rho-1)$.
\end{proof}
The following proposition about regularly varying distributions is known \cite{Nair_2022}.
\begin{prop}\label{prop:index_of_RV}
    If a random variable $Z$ follows a regularly varying distribution with index $-\rho$, then $\rho \ge 0$.
\end{prop}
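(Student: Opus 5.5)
The plan is to argue by contradiction. The key ingredient is that a tail probability $\bar F(z)=P(Z\ge z)$ is non-increasing in $z$ and bounded above by $1$.

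Suppose that $\bar F\in\mathcal{RV}(-\rho)$ with $\rho<0$, so the index $-\rho$ is strictly positive. Fix $y=2$ in the definition of regular variation. Then
\[
\lim_{x\to\infty}\frac{\bar F(2x)}{\bar F(x)}=2^{-\rho}>1 .
\]
Note that the definition of $\mathcal{RV}$ requires $\bar F$ to take values in $\mathbb{R}_+$, so $\bar F(x)>0$ and the ratio is well defined. On the other hand, $\bar F$ is non-increasing, so $\bar F(2x)\le\bar F(x)$ for every $x\ge0$. Hence the ratio is at most $1$ for all $x$, and so is its limit. This contradicts $2^{-\rho}>1$, and therefore $\rho\ge0$.

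There is a second route that I would mention as a cross-check. Write $\bar F(x)=x^{-\rho}L(x)$ with $L$ slowly varying, using the representation quoted after the definition. If $\rho<0$, then $x^{-\rho}L(x)\to\infty$, because a slowly varying function satisfies $x^{-\varepsilon}\le L(x)\le x^{\varepsilon}$ eventually. This contradicts $\bar F\le 1$. However, that route needs the Potter-type bound on $L$, which the paper has not stated. The monotonicity argument above is self-contained, so I will use that one.

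I do not expect a real obstacle here. The only point needing care is that regular variation presupposes $\bar F(x)>0$ for all large $x$; this is built into the paper's definition, since the codomain is $\mathbb{R}_+$. If one also wants to cover the degenerate case where $\bar F$ eventually vanishes, I would simply note that such an $\bar F$ is not regularly varying at all, so the statement holds vacuously there.
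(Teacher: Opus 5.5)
Your proof is correct. The paper gives no argument for this proposition: it records the statement as known and defers to Nair et al.~\cite{Nair_2022}. Your monotonicity argument therefore supplies a complete, self-contained proof where the paper has only a citation.

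It is also about as lean as possible. It uses only that $\bar F$ is nonincreasing and positive for large arguments, together with the defining limit at the single point $y=2$. It never needs $\bar F\le 1$.

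Your second route rests on the complementary property, boundedness rather than monotonicity. It also relies less on outside material than you suggest. The paper does state Karamata's representation (Theorem~\ref{thm:Karamata}, just after this proposition). Applying it directly to $\bar F$ with $\beta(t)\to-\rho>0$ gives $\bar F(x)\ge c' x^{-\rho/2}\to\infty$ for large $x$, which contradicts $\bar F\le 1$. No separate Potter-type bound on a slowly varying factor is needed.

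Both routes work. The first is preferable because it avoids the representation theorem altogether.
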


By Propositions~\ref{prop:PMF} and~\ref{prop:index_of_RV}, for a discrete distribution whose tail probability is regularly varying, if the assumption of Proposition~\ref{prop:PMF} is satisfied, then the function $f:\mathbb{R}_+ \to \mathbb{R}_+$ defined by \eqref{eq:funk_from_pmf} satisfies
\[
    f \in \mathcal{RV}(-\rho)
\]
for some $\rho \ge 1$.
Hereafter, throughout this paper, we assume without further mention that
any discrete regularly varying distribution has an eventually monotone
probability mass function and that the associated function $f$ belongs to
$\mathcal{RV}(-\rho)$ with $\rho \ge 1$.

The following result, called Karamata's representation theorem, is well known for regularly varying functions~\cite{Nair_2022}.

\begin{thm}[Karamata's representation theorem~\cite{Nair_2022}]\label{thm:Karamata}
A function $g:\mathbb{R}_+\rightarrow\mathbb{R}_+$ is a regularly varying function with index $\rho$ if and only if it can be represented as 
\begin{equation*}
g(x)=c(x)\exp\left\{\int_1^x \frac{\beta(t)}{t}\mathrm{d}t\right\}
\end{equation*}
for $x>0$, where $c(\cdot)$ and $\beta(\cdot)$ satisfy $\lim_{x\rightarrow\infty}c(x)=c\in(0,\infty)$ and $\lim_{x\rightarrow\infty}\beta(x)=\rho$.
\end{thm}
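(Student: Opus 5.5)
The plan is to prove the two directions separately. The ``if'' direction is a direct computation. The ``only if'' direction goes through a logarithmic change of variables and the uniform convergence theorem (Theorem~\ref{thm:UCT}). Throughout, regularly varying functions are assumed measurable, as is implicit in Theorem~\ref{thm:UCT}.

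\emph{Sufficiency.} Suppose $g(x)=c(x)\exp\{\int_1^x \beta(t)t^{-1}\,\mathrm{d}t\}$ with $c(x)\to c\in(0,\infty)$ and $\beta(x)\to\rho$. Fix $y>0$ and write
\[
\frac{g(xy)}{g(x)}=\frac{c(xy)}{c(x)}\exp\left\{\int_x^{xy}\frac{\beta(t)}{t}\,\mathrm{d}t\right\}
=\frac{c(xy)}{c(x)}\exp\left\{\int_1^{y}\frac{\beta(xs)}{s}\,\mathrm{d}s\right\},
\]
using the substitution $t=xs$. The first factor tends to $c/c=1$. On the compact interval between $1$ and $y$, $\beta(xs)\to\rho$ uniformly in $s$, because the single-variable limit $\beta(t)\to\rho$ is uniform once $t\ge x\min(1,y)$. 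Hence the integral tends to $\rho\log y$, and $g(xy)/g(x)\to y^\rho$, so $g\in\mathcal{RV}(\rho)$.

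\emph{Necessity.} Let $g\in\mathcal{RV}(\rho)$ and put $L(x)=x^{-\rho}g(x)$, which is slowly varying. It suffices to represent $L$ with limit function $\beta_L\to 0$; the representation of $g$ then follows with $\beta=\rho+\beta_L$, because $x^\rho=\exp\{\int_1^x \rho\,t^{-1}\,\mathrm{d}t\}$. Set $h(u)=\log L(e^u)$. Slow variation becomes $h(u+v)-h(u)\to 0$ as $u\to\infty$ for each fixed $v$, and by Theorem~\ref{thm:UCT} this convergence is uniform in $v\in[0,1]$. I would then decompose
\[
h(u)=\int_u^{u+1}h(s)\,\mathrm{d}s+\int_u^{u+1}\{h(u)-h(s)\}\,\mathrm{d}s .
\]
The second term tends to $0$ by the uniformity. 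For the first term, $H(u):=\int_u^{u+1}h(s)\,\mathrm{d}s$ is absolutely continuous with $H'(u)=h(u+1)-h(u)=:\varepsilon(u)$ almost everywhere, and $\varepsilon(u)\to0$. Therefore $H(u)=H(B)+\int_B^u\varepsilon(s)\,\mathrm{d}s$ for some large $B$.

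Returning to $x=e^u$ with $t=e^s$, we get $\int_B^u\varepsilon(s)\,\mathrm{d}s=\int_{e^B}^x \varepsilon(\log t)\,t^{-1}\,\mathrm{d}t$. Define $\beta_L(t)=\varepsilon(\log t)$ for $t\ge e^B$ and $\beta_L(t)=0$ for $1\le t<e^B$. Then
\[
c(x)=\exp\left\{H(B)+\int_{\log x}^{\log x+1}\{h(\log x)-h(s)\}\,\mathrm{d}s\right\}
\]
for $x\ge e^B$, which tends to $e^{H(B)}\in(0,\infty)$. For $x<e^B$, define $c(x)=L(x)\exp\{-\int_1^x\beta_L(t)t^{-1}\,\mathrm{d}t\}$ so that the identity holds there too; this is harmless because only the limit of $c$ matters.

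\emph{Main obstacle.} The delicate point is ensuring that all integrals exist, i.e.\ that $h$ is locally bounded and integrable on $[B,\infty)$. I would obtain this from the uniform convergence theorem. Choose $B$ so that $|h(u+v)-h(u)|\le1$ for all $u\ge B$ and $v\in[0,1]$. Then $h$ is bounded on each interval $[B,B+n]$ by $|h(B)|+n$, and measurability gives integrability. Everything else is routine bookkeeping.
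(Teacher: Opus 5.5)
Your proof is correct. The paper gives no proof of this statement; it is quoted from \cite{Nair_2022} as a tool for Lemmas~\ref{lem:finiteness1} and~\ref{lem:finiteness2}, so there is no in-paper argument to compare with. What you have written is the classical proof (cf.\ Section~1.3 of \cite{Bingham_1987}). You pass to $h(u)=\log L(e^u)$ and obtain $\sup_{0\le v\le 1}|h(u+v)-h(u)|\to 0$ from Theorem~\ref{thm:UCT} with $\Lambda=[1,e]$. You then use
\[
h(u)=\int_B^{B+1}h(s)\,\mathrm{d}s+\int_B^{u}\{h(s+1)-h(s)\}\,\mathrm{d}s+\int_u^{u+1}\{h(u)-h(s)\}\,\mathrm{d}s ,
\]
with local boundedness of $h$ on $[B,\infty)$ again supplied by the uniform convergence theorem.

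There are only bookkeeping remarks.

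First, the middle term is just the telescoping identity $\int_u^{u+1}h-\int_B^{B+1}h=\int_B^u\{h(s+1)-h(s)\}\,\mathrm{d}s$ for $u\ge B$. So you do not need absolute continuity of $H$ or an a.e.\ derivative.

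Second, the paper's version asks for the representation for all $x>0$, rather than the usual $x\ge a$. You should therefore also set $\beta_L\equiv 0$ on $(0,1)$ so that $\int_1^x\beta_L(t)t^{-1}\,\mathrm{d}t$ is defined there; your formula for $c(x)$ on $(0,e^B)$ then works verbatim.

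Third, in the sufficiency direction you tacitly need $\beta$ to be locally integrable, which is implicit in the statement.

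Finally, your explicit measurability assumption is a genuine hypothesis of the classical theorem, since it is what makes Theorem~\ref{thm:UCT} available. The paper's formulation leaves it unstated.
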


\subsection{Finiteness of the Shannon entropy of regularly varying distributions}

As shown later, the marginal PYP induces a regularly varying distribution. 
However, there exist regularly varying distributions whose Shannon entropy diverges \cite{Baccetti_2013}.
In this subsection, we first establish general conditions for the finiteness of the Shannon entropy of regularly varying distributions, and then apply these results to show that the Shannon entropy of the marginal PYP is finite.

\begin{thm}\label{thm:powerlaw_ent}
Let $\rho \ge 1$ and $\bm{p} = (p_1, p_2, \ldots) \in \Delta_{\infty}$.
Suppose that $\bm{p}$ is eventually monotone and that the function $f$ defined in \eqref{eq:funk_from_pmf} is regularly varying with index $-\rho$.
Then the convergence or divergence of the Shannon entropy $H(\bm{p})$
is characterized as follows:
\begin{align*}
    \text{if}\quad\rho > 1, & \quad \text{then} \quad H(\bm{p}) < \infty, \\
    \text{if}\quad\rho = 1 \quad \text{and} \quad \sum_{k=1}^\infty p_k \log k < \infty,
    &\quad \text{then} \quad H(\bm{p}) < \infty, \\
    \text{if}\quad\rho = 1 \quad \text{and} \quad \sum_{k=1}^\infty p_k \log k = \infty,
    &\quad \text{then} \quad H(\bm{p}) = \infty.
\end{align*}
\end{thm}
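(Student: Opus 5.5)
The plan is to use Karamata's representation (Theorem~\ref{thm:Karamata}) to turn $-\log p_k$ into something comparable to $\rho\log k$. Since $f\in\mathcal{RV}(-\rho)$ and $f(k)=p_k$, we can write $p_k=c(k)\exp\{\int_1^k\beta(t)/t\,\mathrm{d}t\}$ with $c(k)\to c\in(0,\infty)$ and $\beta(t)\to-\rho$. Hence
\[
-\log p_k=-\log c(k)-\int_1^k\frac{\beta(t)}{t}\,\mathrm{d}t ,
\]
and for any $\varepsilon\in(0,1)$ there is $k_0$ with
\[
(\rho-\varepsilon)\log k - C_\varepsilon\le -\log p_k\le(\rho+\varepsilon)\log k + C_\varepsilon \quad (k\ge k_0),
\]
for a constant $C_\varepsilon$. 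This uses only that $\beta$ is eventually within $\varepsilon$ of $-\rho$ and that $c(k)$ is bounded away from $0$ and $\infty$. Equivalently, $-\log p_k/\log k\to\rho$. The finitely many terms with $k<k_0$ contribute a finite amount to $H(\bm p)$, so only the tail matters.

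The upper bound then gives $H(\bm p)\le \text{const}+(\rho+\varepsilon)\sum_k p_k\log k + C_\varepsilon$. The lower bound gives $H(\bm p)\ge (\rho-\varepsilon)\sum_{k\ge k_0}p_k\log k-\text{const}$. So in all cases $H(\bm p)<\infty$ if and only if $\sum_k p_k\log k<\infty$. This settles the second and third claims directly, since $\rho=1>0$ lets us take $\varepsilon<1$ in the lower bound.

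For $\rho>1$ it remains to show that $\sum_k p_k\log k<\infty$ holds automatically. By Karamata again, $p_k\le k^{-\rho+\varepsilon}$ for large $k$. Choosing $\varepsilon=(\rho-1)/2$ gives $p_k\le k^{-(1+\rho)/2}$ with exponent $>1$, so $\sum_k k^{-(1+\rho)/2}\log k<\infty$ and hence $H(\bm p)<\infty$. Equivalently, one can invoke the Potter-type bound coming from the representation theorem.

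The main obstacle is the care needed at the step relating $-\log p_k$ to $\log k$ uniformly in $k$. The integral $\int_1^k\beta(t)/t\,\mathrm{d}t$ must be split at a threshold beyond which $|\beta(t)+\rho|<\varepsilon$; the part below the threshold is a finite constant. We must also check that $p_k>0$, which holds since $\bm p\in\Delta_\infty$, so the logarithms are defined. Eventual monotonicity is not needed beyond the hypothesis that $f$ is regularly varying.
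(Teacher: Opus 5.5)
Your proposal is correct and follows essentially the same route as the paper. The paper writes $p_k=k^{-\rho}L(k)$ and applies Karamata's representation to the slowly varying $L$ to show that $H(\bm p)<\infty$ if and only if $\sum_k p_k\log k<\infty$ (Lemma~\ref{lem:finiteness1}), then uses the same representation with $\varepsilon<\rho-1$ to get $\sum_k p_k\log k<\infty$ when $\rho>1$ (Lemma~\ref{lem:finiteness2}). Applying the representation to $f$ directly rather than to $L$ is only a cosmetic difference, and your explicit lower bound $-\log p_k\ge(\rho-\varepsilon)\log k-C_\varepsilon$ spells out the divergence case that the paper leaves as ``analogous.''
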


\begin{proof}
By the properties of regularly varying functions, there exists a slowly varying function $L(\cdot)$ such that
\begin{equation*}
    f(t)=t^{-\rho}L(t) \quad (t\in\mathbb{R_+}).
\end{equation*}
In particular, for $k\in\mathbb{N}$,
\[
f(k)=p_k=k^{-\rho}L(k).
\]
Therefore, the conclusion follows from Lemmas~\ref{lem:finiteness1} and~\ref{lem:finiteness2} below.
\end{proof}

\begin{lem}\label{lem:finiteness1}
Let $\rho \geq 1$.
Consider $\bm{p} = (p_1,p_2,\ldots) \in \Delta_\infty$ such that $p_k = k^{-\rho}L(k)$ $(k=1,2,\ldots)$, where $L$ is a slowly varying function.
Then, for the Shannon entropy $H(\bm{p})$ of $\bm{p}$,
\begin{align*}
    \text{if}\quad\sum_{k=1}^\infty p_k\log k<\infty,
    &\quad \text{then} \quad H(\bm{p})<\infty, \\
    \text{if}\quad \sum_{k=1}^\infty p_k\log k=\infty,
    &\quad \text{then} \quad H(\bm{p})=\infty.
\end{align*}
\end{lem}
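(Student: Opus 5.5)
The plan is to write $-\log p_k = \rho\log k - \log L(k)$, which gives the decomposition
\begin{equation*}
H(\bm{p}) = \rho\sum_{k=1}^\infty p_k\log k \;-\; \sum_{k=1}^\infty p_k \log L(k).
\end{equation*}
Every term is nonnegative once $k$ is large enough, since $p_k\to 0$ forces $-\log p_k>0$. So we may rearrange freely, and finitely many initial terms play no role. The task then splits into two halves: show $H(\bm{p})\ge c\sum_k p_k\log k$ for the divergence direction, and $H(\bm{p})\le C\sum_k p_k\log k + C'$ for the convergence direction.

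The main tool is Potter-type control of $\log L$. This follows from Karamata's representation (Theorem~\ref{thm:Karamata}) applied to $L\in\mathcal{RV}(0)$. We have $\log L(x) = \log c(x) + \int_1^x \beta(t)/t\,\mathrm{d}t$ with $c(x)\to c\in(0,\infty)$ and $\beta(t)\to 0$. Hence for every $\varepsilon>0$ there is $k_0$ such that $|\log L(k)|\le \varepsilon\log k + C_\varepsilon$ for all $k\ge k_0$.

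Substituting this into the decomposition gives, for $k\ge k_0$,
\begin{equation*}
(\rho-\varepsilon)\,p_k\log k - C_\varepsilon p_k \;\le\; -p_k\log p_k \;\le\; (\rho+\varepsilon)\,p_k\log k + C_\varepsilon p_k .
\end{equation*}
We choose $\varepsilon<\rho$, which is possible since $\rho\ge1$. Summing over $k\ge k_0$ and using $\sum_k p_k\le 1$ yields both directions. If $\sum_k p_k\log k<\infty$, the upper bound shows $H(\bm{p})<\infty$. If $\sum_k p_k\log k=\infty$, the lower bound shows $H(\bm{p})=\infty$. The finitely many terms with $k<k_0$ are finite and do not affect either conclusion.

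The only delicate step is extracting the estimate $|\log L(k)|=o(\log k)$ from Theorem~\ref{thm:Karamata}. Write $\int_1^x \beta(t)/t\,\mathrm{d}t$ as an integral over $[1,x_\varepsilon]$, which is a constant, plus an integral over $[x_\varepsilon,x]$, which is bounded in absolute value by $\varepsilon\log x$. Since $c(\cdot)$ converges to a positive limit, $\log c(x)$ is bounded for large $x$. Beyond this, the argument is routine summation.
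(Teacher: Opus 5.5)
Your proposal is correct and follows essentially the same route as the paper: Karamata's representation gives $|\log L(k)|\le \varepsilon\log k + C_\varepsilon$ for large $k$, and this sandwiches $-p_k\log p_k$ between $(\rho\pm\varepsilon)\,p_k\log k$, up to $O(p_k)$ error terms. You also spell out the divergence direction by taking $\varepsilon<\rho$, which the paper only describes as following from an analogous argument.
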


\begin{proof}
    By Theorem~\ref{thm:Karamata}, for any $\epsilon>0$, there exists $n_0\in\mathbb{N}$ such that  $|\beta(k)|<\epsilon$ and $|c(k)-c|<\epsilon$ for all $k\ge n_0$.
    For $k \geq n_0$, this implies
    \begin{equation*}
        \log(c-\epsilon)+\int_1^{n_0}\frac{\beta(t)}{t}dt-\epsilon\log\left(\frac{k}{n_0}\right)\le \log L(k)\le\log(c+\epsilon)+\int_1^{n_0}\frac{\beta(t)}{t}dt+\epsilon\log\left(\frac{k}{n_0}\right).
    \end{equation*}
    On the other hand, the Shannon entropy $H(\bm{p})$ can be decomposed as
    \begin{equation*}
       H(\bm{p})=-\sum_{k=1}^{n_0}p_k\log p_k+\sum_{k=n_0+1}^{\infty}\frac{L(k)}{k^\rho}\log\left( \frac{k^\rho}{L(k)}\right).
    \end{equation*}
    The first term on the right-hand side is finite.
    For the second term, we have
    \begin{align*}
        \sum_{k=n_0+1}^{\infty}p_k\log\left( \frac{k^\rho}{L(k)}\right)
        &=\sum_{k=n_0+1}^{\infty}p_k(\rho\log k-\log L(k))\\
        &\le\sum_{k=n_0+1}^{\infty}p_k\left\{\rho\log k-\log(c-\epsilon)-\int_1^{n_0}\frac{\beta(t)}{t}dt+\epsilon\log\left(\frac{k}{n_0}\right)\right\}\\
        &=(\rho+\epsilon)\sum_{k=n_0+1}^{\infty}p_k\log k+A\sum_{k=n_0+1}^{\infty}p_k,
    \end{align*}
    where $A$ is a constant.
    Hence, finiteness of the first term on the right-hand side implies finiteness of $H(\bm{p})$.
    The divergence case follows from an analogous argument.
    This completes the proof.
\end{proof}

\begin{lem}\label{lem:finiteness2}
    Let $\rho > 1$.
    Consider a sequence $\{p_k\}_{k\geq 1}$ such that $p_k = k^{-\rho}L(k)$ $(k=1,2,\ldots)$, where $L$ is a slowly varying function.
    Then, $\sum_{k=1}^\infty p_k\log k<\infty$.
\end{lem}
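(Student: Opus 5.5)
The plan is to absorb the slowly varying factor into a slightly smaller power of $k$, and then compare with a convergent $p$-series. Since $\rho>1$, fix $\epsilon>0$ with $\rho-2\epsilon>1$, for instance $\epsilon=(\rho-1)/3$. The key input is Potter-type growth control for $L$, which we read off directly from Karamata's representation theorem (Theorem~\ref{thm:Karamata}).

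\textbf{Step 1 (bound on $L$).} Since $L\in\mathcal{RV}(0)$, Theorem~\ref{thm:Karamata} gives $L(x)=c(x)\exp\{\int_1^x \beta(t)/t\,\mathrm{d}t\}$ with $c(x)\to c\in(0,\infty)$ and $\beta(x)\to0$. Choose $n_0$ so that $|\beta(t)|<\epsilon$ for $t\ge n_0$ and $c(k)\le c+\epsilon$ for $k\ge n_0$. This is exactly the estimate already used in the proof of Lemma~\ref{lem:finiteness1}, and it yields, for $k\ge n_0$,
\[
L(k)\le (c+\epsilon)\exp\left\{\int_1^{n_0}\frac{\beta(t)}{t}\,\mathrm{d}t\right\}\left(\frac{k}{n_0}\right)^{\epsilon}=B\,k^{\epsilon},
\]
with $B$ a finite constant independent of $k$.

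\textbf{Step 2 (absorb the logarithm).} Since $\log k/k^{\epsilon}\to0$, there is a constant $B'$ with $\log k\le B' k^{\epsilon}$ for all $k\ge1$.

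\textbf{Step 3 (summation).} Split the series at $n_0$. The finite head $\sum_{k<n_0}p_k\log k$ is trivially finite. For the tail,
\[
\sum_{k\ge n_0}p_k\log k\le BB'\sum_{k\ge n_0}k^{-(\rho-2\epsilon)}<\infty,
\]
because $\rho-2\epsilon>1$. This proves $\sum_{k=1}^\infty p_k\log k<\infty$.

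\textbf{Main obstacle.} The only delicate point is Step 1: turning the asymptotic statement $\beta\to0$ into a uniform polynomial bound $L(k)=O(k^{\epsilon})$. This is precisely what the integral representation provides, since $\int_{n_0}^{k}\beta(t)/t\,\mathrm{d}t\le\epsilon\log(k/n_0)$. The remaining steps are routine.
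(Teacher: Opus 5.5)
Your proof is correct and takes essentially the same approach as the paper: both use Karamata's representation to obtain a polynomial (Potter-type) upper bound and then compare with a convergent $p$-series. The only difference is cosmetic. The paper applies the representation to $p_k$ itself (index $-\rho$), getting $p_k\le A\,k^{-(\rho-\epsilon)}$, and sums $\log k/k^{\rho-\epsilon}$ directly. You apply it to $L$ and spend a second $\epsilon$ absorbing the logarithm.
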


\begin{proof}
    Fix $\epsilon\in(0, \rho-1)$.
    By Theorem~\ref{thm:Karamata}, there exist functions $\beta(\cdot)$ and $c(\cdot)$ and an integer $n_0\in\mathbb{N}$ such that, for all $k\ge n_0$, both $|\beta(k)+\rho|<\epsilon$ and $|c(k)-c|<\epsilon$ hold.
    For $k \geq n_0$, we have
    \begin{align*}
        p_k\log k
        &=c(k)\exp\left\{\int_1^k\frac{\beta(t)}{t}dt\right\}\log k\\
        &\le(c+\epsilon)\exp\left\{\int_1^{n_{0}}\frac{\beta(t)}{t}dt+\int_{n_0}^k\frac{-\rho+\epsilon}{t}dt\right\}\log k\\
        &=(c+\epsilon)\exp\left\{\int_1^{n_{0}}\frac{\beta(t)}{t}dt+(-\rho+\epsilon)\log\frac{k}{n_0}\right\}\log k\\
        &=A\cdot\frac{\log k}{k^{\rho-\epsilon}},
    \end{align*}
    where $A$ is a constant.
    Therefore, it holds that
    \begin{equation*}
        \sum_{k=1}^\infty p_k\log k<\sum_{k=1}^\infty\frac{\log k}{k^{\rho-\epsilon}}<\infty.
    \end{equation*}
    This completes the proof.
\end{proof}

\begin{rem}
Let $u > 0$.
Baccetti and Visser~\cite{Baccetti_2013} consider a discrete distribution with the infinite-dimensional probability vector $(p_{3},p_{4},\ldots)$ given by
\begin{equation*}
    p_k=\frac{1}{\Sigma(u)k(\log k)^{1+u}} \quad (k=3,4,\ldots),
\end{equation*}
where $\Sigma(u)$ is a normalizing constant.
The associated slowly varying function is $L(k)=1/ \{\Sigma(u)(\log k)^{1+u} \}$.
In this case,
\begin{align*}
    \sum_{k=3}^\infty\frac{L(k)\log k}{k}
    &=\sum_{k=3}^\infty\frac{1}{\Sigma(u)k(\log k)^{u}}. \quad 
\end{align*}
The series diverges for $u\in(0,1]$ and converges for $u>1$.
A corresponding convergence criterion is also established in Baccetti and Visser~\cite{Baccetti_2013}.
\end{rem}

Next, we show that the marginal PYP follows a regularly varying distribution.
As a consequence, we establish that the Shannon entropy of the marginal PYP is finite.

\begin{prop}\label{prop:regularly_PYP}
Assume that $d\neq0$.
Then the probability mass function of the marginal PYP, denoted by $P_{d,\alpha}^{\mathrm{MPY}}(k)$ $(k=1,2,\ldots)$, is a regularly varying function with index $-1/d$.
\end{prop}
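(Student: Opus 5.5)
The plan is to work from the Gamma-function representation \eqref{eq:MPY2}, which is available because $d\neq0$:
\[
P_{d,\alpha}^{\mathrm{MPY}}(k)=C\,\frac{\Gamma(\frac{\alpha}{d}+k)}{\Gamma(\frac{\alpha+1}{d}+k)},\qquad C=\frac{(1-d)\Gamma(\frac{\alpha+1}{d})}{\alpha\Gamma(\frac{\alpha}{d})}.
\]
This suggests extending $k$ to a real variable. For $x>0$ define
\[
g(x)=C\,\Gamma\!\left(\tfrac{\alpha}{d}+x\right)\big/\Gamma\!\left(\tfrac{\alpha+1}{d}+x\right),
\]
so that $g(k)=P_{d,\alpha}^{\mathrm{MPY}}(k)$. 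Set $a=\alpha/d$ and $b=(\alpha+1)/d$; then $b-a=1/d$, and $a+x>0$, $b+x>0$ for large $x$.

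The first step is to apply the Gamma-ratio asymptotic already used in the proof of the entropy approximation proposition:
\[
\Gamma(x+a)/\Gamma(x+b)=x^{a-b}(1+O(1/x)).
\]
This gives $g(x)=C x^{-1/d}(1+O(1/x))$. Hence for any $y>0$,
\[
\frac{g(xy)}{g(x)}=y^{-1/d}\,\frac{1+O(1/(xy))}{1+O(1/x)}\to y^{-1/d},
\]
so $g\in\mathcal{RV}(-1/d)$. Equivalently, $L(x)=x^{1/d}g(x)\to C\in(0,\infty)$ is trivially slowly varying, which also fits Karamata's representation (Theorem~\ref{thm:Karamata}) with $c(x)=L(x)$ and $\beta\equiv-1/d$.

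The second step handles the subtlety that the paper's convention identifies a pmf with the step function $f$ in \eqref{eq:funk_from_pmf}, not with an arbitrary interpolation. For $t\in(k-1,k]$ we have $f(t)=p_{\lceil t\rceil}$ and $\lceil t\rceil/t\to1$. Therefore
\[
f(t)=C\lceil t\rceil^{-1/d}(1+o(1))=Ct^{-1/d}(1+o(1)),
\]
and the same ratio argument gives $f\in\mathcal{RV}(-1/d)$.

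Finally, I will note that $p_{k+1}/p_k=\frac{\alpha+kd}{\alpha+kd+1}<1$ for $k$ large. So the sequence is eventually monotone, which matches the standing assumptions used afterwards, and since $1/d>1$, Theorem~\ref{thm:powerlaw_ent} then yields finite entropy.

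The only point that needs care is justifying the Gamma-ratio asymptotic, including the requirement $a+x>0$ when $\alpha<0$. If the quoted asymptotic is considered insufficient, I would instead use Stirling's formula $\log\Gamma(z)=(z-\tfrac12)\log z-z+O(1)$, or Wendel's inequality, to get $\log g(x)=-\tfrac1d\log x+\log C+o(1)$ directly. Everything else is routine.
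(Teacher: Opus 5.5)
Your proof is correct and takes essentially the same route as the paper. Both arguments rest on Stirling-type asymptotics for the Gamma-ratio form of $P_{d,\alpha}^{\mathrm{MPY}}(k)$ and then handle the ceiling in the step function $f$: the paper sandwiches $\lceil t\rceil$ between $t$ and $t+1$ inside the ratio $f(\lambda t)/f(t)$, while you use $\lceil t\rceil/t\to1$, which also gives the slightly stronger fact $p_k\sim Ck^{-1/d}$ (the slowly varying part tends to a positive constant).
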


\begin{proof}
Since $d\neq0$, let $p_k = P_{d,\alpha}^\mathrm{MPY}(k)$ be given by \eqref{eq:MPY2} for $k=1,2,\ldots$, and let $f$ be defined by \eqref{eq:funk_from_pmf} using $\bm{p}=(p_1,p_2,\ldots)$.
For all $\lambda>0$ and $t\in\mathbb{R_+}$, it holds that
\[
\frac{f(\lambda t)}{f(t)}
        =\frac{\Gamma(\frac{\alpha}{d}+\lceil\lambda t\rceil)\Gamma(\frac{\alpha+1}{d}+\lceil t\rceil)}{\Gamma(\frac{\alpha}{d}+\lceil t\rceil)\Gamma(\frac{\alpha+1}{d}+\lceil\lambda t\rceil)}.
\]
We have
\begin{align*}
    \frac{f(\lambda t)}{f(t)}
    &=\frac{(\frac{\alpha}{d}+\lceil\lambda t\rceil)^{\frac{1}{d}}\Gamma(\frac{\alpha}{d}+\lceil\lambda t\rceil)}{\Gamma(\frac{\alpha+1}{d}+\lceil\lambda t\rceil)}
        \frac{\Gamma(\frac{\alpha+1}{d}+\lceil t\rceil)}{(\frac{\alpha}{d}+\lceil t\rceil)^{\frac{1}{d}}\Gamma(\frac{\alpha}{d}+\lceil t\rceil)}
        \frac{(\frac{\alpha}{d}+\lceil t\rceil)^{\frac{1}{d}}}{(\frac{\alpha}{d}+\lceil\lambda t\rceil)^{\frac{1}{d}}}\\
    &\le\frac{(\frac{\alpha}{d}+\lceil\lambda t\rceil)^{\frac{1}{d}}\Gamma(\frac{\alpha}{d}+\lceil\lambda t\rceil)}{\Gamma(\frac{\alpha+1}{d}+\lceil\lambda t\rceil)}
        \frac{\Gamma(\frac{\alpha+1}{d}+\lceil t\rceil)}{(\frac{\alpha}{d}+\lceil t\rceil)^{\frac{1}{d}}\Gamma(\frac{\alpha}{d}+\lceil t\rceil)}\frac{(\frac{\alpha}{d}+ t+1)^{\frac{1}{d}}}{(\frac{\alpha}{d}+\lambda t)^{\frac{1}{d}}}.
\end{align*}
By an analogous argument, we have
\[
\frac{f(\lambda t)}{f(t)}
\ge\frac{(\frac{\alpha}{d}+\lceil\lambda t\rceil)^{\frac{1}{d}}\Gamma(\frac{\alpha}{d}+\lceil\lambda t\rceil)}{\Gamma(\frac{\alpha+1}{d}+\lceil\lambda t\rceil)}
        \frac{\Gamma(\frac{\alpha+1}{d}+\lceil t\rceil)}{(\frac{\alpha}{d}+\lceil t\rceil)^{\frac{1}{d}}\Gamma(\frac{\alpha}{d}+\lceil t\rceil)}\frac{(\frac{\alpha}{d}+ t)^{\frac{1}{d}}}{(\frac{\alpha}{d}+\lambda t+1)^{\frac{1}{d}}}.
\]
Applying Stirling's formula and taking the limit $t\to\infty$ yield
\[
\lim_{t\to\infty}\frac{f(\lambda t)}{f(t)}=\lambda^{-\frac{1}{d}}.
\]
Therefore, $f\in\mathcal{RV}(-1/d)$.
This completes the proof.
\end{proof}

\begin{cor}\label{cor:bdd_PYP}
The Shannon entropy of the marginal PYP is finite.
\end{cor}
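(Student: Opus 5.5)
We split into the cases $d=0$ and $d\neq 0$, where $d\in[0,1)$ and $\alpha>-d$ throughout.

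When $d=0$ the marginal PYP is the geometric distribution with parameter $1/(\alpha+1)$. Its entropy was computed explicitly above as $(1+\alpha)\log(1+\alpha)-\alpha\log\alpha$, which is finite for every $\alpha>0$. (For $\alpha=0$ the distribution is degenerate at $k=1$ and the entropy is $0$.)

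Now suppose $d\in(0,1)$. Let $p_k=P^{\mathrm{MPY}}_{d,\alpha}(k)$ and define $f$ by \eqref{eq:funk_from_pmf}. By Proposition~\ref{prop:regularly_PYP}, $f\in\mathcal{RV}(-1/d)$. Since $0<d<1$, the index satisfies $\rho=1/d>1$. We then apply the first case of Theorem~\ref{thm:powerlaw_ent}, which gives $H(\bm p)<\infty$.

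The only hypothesis of Theorem~\ref{thm:powerlaw_ent} still to check is that $\bm p$ is eventually monotone. From \eqref{eq:MPY},
\[
\frac{p_{k+1}}{p_k}=\frac{\alpha+kd}{\alpha+kd+1}<1 \qquad (k\ge 1),
\]
because $\alpha+kd>0$ whenever $\alpha>-d$ and $k\ge1$. Hence $\bm p$ is strictly decreasing, and in particular eventually monotone.

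An alternative route bypasses the regular-variation machinery. The Gamma-ratio asymptotics give $p_k\asymp k^{-1/d}$. Then $-p_k\log p_k=O(k^{-1/d}\log k)$, and this is summable since $1/d>1$. This is the same estimate used in Lemma~\ref{lem:finiteness2}.

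We expect no real obstacle. The only points needing care are the boundary case $\alpha=0$ when $d=0$, and confirming that the index $-1/d$ is strictly below $-1$, which holds because $d<1$.
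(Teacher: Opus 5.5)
Your proof is correct and follows the paper's argument: the geometric closed form for $d=0$, and Proposition~\ref{prop:regularly_PYP} combined with the $\rho=1/d>1$ case of Theorem~\ref{thm:powerlaw_ent} for $d\in(0,1)$. Your ratio computation $p_{k+1}/p_k=(\alpha+kd)/(\alpha+kd+1)<1$ verifies the eventual-monotonicity hypothesis, which the paper only imposes as a blanket assumption; the $\alpha=0$ aside is vacuous, since $d=0$ already forces $\alpha>-d=0$.
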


\begin{proof}
If $d=0$, the marginal PYP reduces to a geometric distribution, and its Shannon entropy is finite.  
If $d\neq0$, the marginal PYP has a regular variation index of $-1/d$, and Theorem~\ref{thm:powerlaw_ent} together with Proposition~\ref{prop:regularly_PYP} implies that the Shannon entropy of the marginal PYP is finite because $d\in(0,1)$.
\end{proof}

\section{Consistency of the proposed entropy estimator}\label{sec:consistency}

In this section, we establish the consistency of the proposed estimator.
For clarity, we reintroduce the notation used in the asymptotic analysis.  
Let $K \in \mathbb{N}$ and consider a probability vector $\bm{p} = (p_1, \ldots, p_K) \in \Delta_K$.  
Let $d_0 \in [0,1)$ and $\alpha_0 \in (-d_0, \infty)$ be fixed constants.  
Consider a random sample of size $N \in \mathbb{N}$ drawn from the population distribution with frequencies given by $\bm{p}$.  
Let $\mathbf{x}_N = (\tilde{n}_{N,1}, \ldots, \tilde{n}_{N,K})$ denote the observed labeled frequencies, $T_N$ the number of observed species, $\mathbf{y}_N = (n_{N,1}, \dots, n_{N,T_N})$ the observed unlabeled frequencies, and $m_{N,1}$ the number of singletons.  
In the proposed method, the parameters $(d, \alpha)$ estimated from the observations are denoted by $(\hat{d}_N, \hat{\alpha}_N)$.
For $d \in [0,1)$ and $\alpha \in (-d, \infty)$, define
\[\pi_{N,d,\alpha,k} = \frac{1-d}{\alpha + T_N d}\prod_{j=1}^{k}\frac{\alpha+(T_N + j - 1)d}{\alpha+(T_N + j - 1)d+1} \quad (k=1,2,\ldots),\]
and let $\boldsymbol{\pi}_{N,d,\alpha} = (\pi_{N,d,\alpha,1},\pi_{N,d,\alpha,2},\ldots)$.
Further define
\begin{align*}
&\bm{q}^*_{N,d,\alpha}(\mathbf{y}_N) =\left( \frac{n_{N,1} - d}{N+\alpha} ,\dots, \frac{n_{N,T_N} - d}{N+\alpha} , \frac{\alpha + T_N d}{N+\alpha} \right), \\
&\bm{q}_{N,d,\alpha}(\mathbf{y}_N) =\left( \frac{n_{N,1} - d}{N+\alpha} ,\dots, \frac{n_{N,T_N} - d}{N+\alpha} , \frac{\alpha + T_N d}{N+\alpha} \boldsymbol{\pi}_{N,d,\alpha} \right).
\end{align*}
Then, the proposed estimator $\hat{H}_N$ can be written as $H(\bm{q}_{N,\hat{d}_N,\hat{\alpha}_N}(\mathbf{y}_N))$.
Consider the sequence of random variables, $\{T_N\}_{N\geq1}$, $\{\mathbf{x}_N\}_{N\geq1}$, $\{\mathbf{y}_N\}_{N\geq1}$, $\{m_{N,1}\}_{N\geq1}$, $\{(\hat{d}_N, \hat{\alpha}_N)\}_{N\geq1}$, and $\{\hat{H}_N\}_{N\geq1}$.
Hereafter, we take the limit $N\to\infty$, and establish the consistency of the proposed estimator.

\begin{prop}
The estimator $\hat{H}_N$ is consistent for $H(\bm{p})$.
\end{prop}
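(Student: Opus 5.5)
The plan is to show that, with probability tending to one, the procedure falls into the large-sample branch, so that $(\hat d_N,\hat\alpha_N)=(d_0,\alpha_0)$ is a fixed pair. Then $\hat H_N$ is a smooth perturbation of $\hat H^{\mathrm{ML}}$, and consistency follows from the law of large numbers and the continuity of $H$ on $\Delta_K$.

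\emph{Step 1: the event $E_N=\{T_N=K,\ m_{N,1}=0\}$ has probability tending to one.} Since $K<\infty$ and every $p_i>0$, we have
\[
\mathsf{P}(\tilde n_{N,i}\le 1)=(1-p_i)^N+Np_i(1-p_i)^{N-1}\to 0
\]
for each $i$. A union bound over the $K$ species then gives $\mathsf{P}(E_N^c)\to 0$. On $E_N$ we have $m_{N,1}=0$, so the rule of Section~\ref{sec:proposed_estimator} sets $(\hat d_N,\hat\alpha_N)=(d_0,\alpha_0)$. It therefore suffices to show that $H(\bm q_{N,d_0,\alpha_0}(\mathbf y_N))\,I_{E_N}\to^p H(\bm p)$.

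\emph{Step 2: decompose the estimator on $E_N$.} By the decomposition in Section~\ref{sec:mixture_model},
\[
\hat H_N=H(\bm q^*_{N,d_0,\alpha_0})+\frac{\alpha_0+Kd_0}{N+\alpha_0}\,H(\bm\pi_{N,d_0,\alpha_0}).
\]
The first $K$ coordinates of $\bm q^*$ are $(\tilde n_{N,i}-d_0)/(N+\alpha_0)$. These converge in probability to $p_i$ by the weak law of large numbers, because $\tilde n_{N,i}/N\to^p p_i$ and $d_0/N,\alpha_0/N\to 0$. The last coordinate is $(\alpha_0+Kd_0)/(N+\alpha_0)\to 0$. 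The map $(x_1,\dots,x_{K+1})\mapsto -\sum x_i\log x_i$ is continuous on the closed simplex, with $0\log 0=0$. Hence, by the continuous mapping theorem, $H(\bm q^*_{N,d_0,\alpha_0})\to^p H(\bm p)$.

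\emph{Step 3 (main obstacle): control the tail term.} On $E_N$ the vector $\bm\pi_{N,d_0,\alpha_0}$ is the marginal PYP with the deterministic parameters $(d_0,\alpha_0+Kd_0)$; this follows from \eqref{eq:MPY} with $\alpha$ replaced by $\alpha_0+Kd_0$. It does not depend on $N$. By Corollary~\ref{cor:bdd_PYP}, which rests on Theorem~\ref{thm:powerlaw_ent} and Proposition~\ref{prop:regularly_PYP} applied with index $-1/d_0<-1$ when $d_0>0$, or on the geometric formula when $d_0=0$, its entropy is a finite constant $H_0$. The tail term is thus $(\alpha_0+Kd_0)H_0/(N+\alpha_0)$, which tends to $0$ deterministically.

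The only care needed is that $\alpha_0+Kd_0>0$, so that the PYP is well defined. This holds because $\alpha_0>-d_0$ and $K\ge1$. The degenerate case $\alpha_0+Kd_0=0$ is impossible for the same reason. If one also wants to treat the small-sample branch, it contributes only on $E_N^c$, whose probability tends to zero, so no bound on $\hat H_N$ there is required for convergence in probability.

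Combining Steps 1–3 gives $\hat H_N\to^p H(\bm p)$.
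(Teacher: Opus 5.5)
Your proof is correct and follows essentially the same route as the paper. You reduce to the fixed pair $(d_0,\alpha_0)$ on an event of probability tending to one (no singletons and $T_N=K$), split $H(\bm q)$ into $H(\bm q^*)$ plus the weighted PYP entropy, apply the continuous mapping theorem to the first part, and use Corollary~\ref{cor:bdd_PYP} to kill the $O(1/N)$ second part; your remark that on $\{T_N=K\}$ the tail vector is a marginal PYP with the deterministic parameter $\alpha_0+Kd_0>0$ makes explicit what the paper's brief appeal to $T_N\le K$ leaves implicit.
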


\begin{proof}

By the law of large numbers, we have $N^{-1} \mathbf{x}_N \to^p \boldsymbol{p}$.
Since $p_k>0$ for all $k\in\{1,\ldots,K\}$, it follows that $\mathsf{P}(m_{N,1} = 0) \to 1$, and consequently $\mathsf{P}((\hat{d}_N, \hat{\alpha}_N) = (d_0, \alpha_0)) \to 1$.
Moreover, $\mathsf{P}(T_N = K) \to 1$.
Define the events
\[
E_N = \{ (\hat{d}_N, \hat{\alpha}_N) \neq (d_0, \alpha_0) \}, \quad
\tilde{E}_N = \{ T_N \neq K \}.
\]
Then the sequences of events $\{E_N \}_{N\geq1}$ and $\{ \tilde{E}_N \}_{N\geq1}$ satisfy $\mathsf{P}(E_N) \to 0$ and $\mathsf{P}(\tilde{E}_N) \to 0$.

For any $\ve>0$, we have
\[
\mathsf{P}(| H(\bm{q}_{N,\hat{d}_N,\hat{\alpha}_N}(\mathbf{y}_N)) - H(\bm{p}) | > \ve) \\
\leq \mathsf{P}(| H(\bm{q}_{N,d_0,\alpha_0} (\mathbf{y}_N)) - H(\bm{p}) | > \ve ) + \mathsf{P}(E_N) 
\]
Since $\mathsf{P} (E_N) \to 0$, it is sufficient to show that $H(\bm{q}_{N,d_0,\alpha_0}(\mathbf{y}_N)) \to^p H(\bm{p})$.
For the Shannon entropy of $\bm{q}_{N,d_0,\alpha_0}(\mathbf{y}_N)$, we have
\begin{equation*}
    H(\bm{q}_{N,d_0,\alpha_0}(\mathbf{y}_N)) 
    = H(\bm{q}^*_{N,d_0,\alpha_0}(\mathbf{y}_N)) + \frac{\alpha_0 + T_N d_0}{N + 
    \alpha_0}H(\boldsymbol {\pi}_{N,d_0,\alpha_0}).
\end{equation*}
By Corollary~\ref{cor:bdd_PYP}, $H(\boldsymbol{\pi}_{N,d_0,\alpha_0}) < \infty$.
Moreover, it follows from $T_N \leq K < \infty$ that $H(\bm{q}_{N,d_0,\alpha_0}(\mathbf{y}_N)) - H(\bm{q}^*_{N,d_0,\alpha_0}(\mathbf{y}_N)) \to^p 0$.
Hence, it suffices to prove that $H(\bm{q}^*_{N,d_0,\alpha_0}(\mathbf{y}_N)) \to^p H(\bm{p})$.
For any $\tilde{\ve}>0$, we have
\begin{align*}
\mathsf{P}(| H(\bm{q}^*_{N,d_0,\alpha_0}(\mathbf{y}_N)) - H(\bm{p}) | > \tilde{\ve}) 
&\leq \mathsf{P}(\{| H(\bm{q}_{N,d_0,\alpha_0} (\mathbf{y}_N)) - H(\bm{p}) | > \tilde{\ve}\} \cap \{T_N = K\} ) + \mathsf{P}(\tilde{E}_N)  \\
&= \mathsf{P}(\{| H(\bm{q}_{N,d_0,\alpha_0} (\mathbf{x}_N)) - H(\bm{p}) | > \tilde{\ve}\} \cap \{T_N = K\} ) + \mathsf{P}(\tilde{E}_N),
\end{align*}
where, on the event $T_N = K$, we define
\[ \bm{q}^*_{N,d,\alpha}(\mathbf{x}_N) =\left( \frac{\tilde{n}_1 - d}{N+\alpha} ,\dots, \frac{\tilde{n}_{K} - d}{N+\alpha} , \frac{\alpha + K d}{N+\alpha} \right),\]
and use $ H(\bm{q}^*_{N,d_0,\alpha_0} (\mathbf{y}_N)) = H(\bm{q}^*_{N,d_0,\alpha_0} (\mathbf{x}_N))$.
Since $\bm{q}^*_{N,d_0,\alpha_0}(\mathbf{x}_N) \to^p (\bm{p},0)$, the continuous mapping theorem implies $H(\bm{q}^*_{N,d_0,\alpha_0}(\mathbf{x}_N)) \to^p H(\bm{p})$.
Moreover, $\mathsf{P}(\tilde{E}_N) \to 0$, which implies $H(\bm{q}^*_{N,d_0,\alpha_0}(\mathbf{y}_N)) \to^p H(\bm{p}) $.
Hence, we conclude that $\hat{H}_N = H(\bm{q}_{N,\hat{d}_N,\hat{\alpha}_N}(\mathbf{y}_N)) \to^p H(\bm{p})$.
\end{proof}

\section{Simulation studies}\label{sec:simulation}
\subsection{Conventional methods for the Shannon entropy estimation}

In this section, we investigate the behavior of the proposed estimator on synthetic data and compare it with existing methods. 
We first briefly review the baseline estimators considered for comparison.

Let $\bm{p} \in \Delta_K$ denote the population probability vector, and let $H(\boldsymbol{p})$ be the Shannon entropy.
The maximum likelihood estimator (MLE) of $H(\boldsymbol{p})$, denoted by  $\hat{H}^{\mathrm{ML}}$, is obtained by plugging the MLE $\hat{\boldsymbol{p}}^\mathrm{ML}$ of $\boldsymbol{p}$ into $H(\boldsymbol{p})$, that is,
\begin{equation*}
    \hat{H}^{\mathrm{ML}}=-\sum_{i=1}^T\frac{n_i}{N}\log\frac{n_i}{N}.
\end{equation*}
A bias-corrected version of $\hat{H}^{\mathrm{ML}}$ is given by
\begin{equation}
	\hat{H}^{\mathrm{MM}}=\hat{H}^{\mathrm{ML}}+\frac{T-1}{2N},
\end{equation}
which is known as the Miller--Madow estimator~\cite{Miller_1955}. 
Moreover, as the estimator of $\boldsymbol{p}$ the Good--Turing estimator 
\[
	\hat{\bm{p}}^{\mathrm{GT}} =\left( 1-\frac{m_{1}}{N} \right) \hat{\bm{p}}^{\mathrm{ML}}
\]
has been proposed~\cite{Good_1953}.
Substituting $\hat{\bm{p}}^{\mathrm{GT}}$ into the Horvitz--Thompson estimator yields the so-called Chao--Shen estimator
\begin{equation}
	\hat{H}^{\mathrm{CS}}=-\sum_{i=1}^{T}\frac{\hat{p}^{\mathrm{GT}}_{i}\log{\hat{p}^{\mathrm{GT}}}_{i}}{1-(1-\hat{p}^{\mathrm{GT}}_{i})^{N}},
\end{equation}
which is known to perform well even in the presence of unseen species~\cite{ChaoShen_2003}.

\begin{rem}
As noted in \cite{Hausser}, many other estimators of the Shannon entropy have been proposed.
However, most of these methods either assume that the total number of species is known or require numerical integration. 
In this study, we focus on methods without such restrictions.
Hence, we consider the maximum likelihood estimator, the Miller--Madow estimator, and the Chao--Shen estimator as baselines for comparison.  
\end{rem}

In addition to these estimators, we also include the Shannon entropy estimated by $\mathrm{DPYM}(\mathbf{y},0,1)$ and $\mathrm{DPYM}(\mathbf{y},1/2,0)$ as benchmarks in our simulation study.

\subsection{Simulation setup}
Let us describe simulation settings.
The dimension $K$ of the true probability vector $\bm{p}$ is fixed at $5000$, and the sample size $N$ is set to one of the following values: 10, 30, 50, 100, 200, 400, 600, 800, 1000, 3000, 5000, 8000, 10000, 20000.
The probability vector $\bm{p}$ is generated according to the following five scenarios:
\begin{enumerate}
    \item Sparse distribution, following a Dirichlet distribution with parameter $\alpha=0.1$.
    \item Homogeneous distribution, following a Dirichlet distribution with parameter $\alpha=1$.
    \item Nearly uniform distribution, following a Dirichlet distribution with parameter $\alpha=10$.
    \item Combined distribution, following Dirichlet distribution whose parameter vector has the first half components set to $0.1$ and the second half components set to $10$.
    \item Zipf distribution with exponent $s=1$, truncated at rank $5000$.
\end{enumerate}
For each scenario and each sample size, we perform $1000$ replications.
In each replication, the true probability vector $\bm{p}$ is independently regenerated.
The performance of the estimators is evaluated using the mean squared error (MSE).

\subsection{Summary of results from simulations}
\begin{figure}[tbp]
    \centering
    \begin{minipage}{0.45\textwidth}
        \centering
        \includegraphics[width=\textwidth]{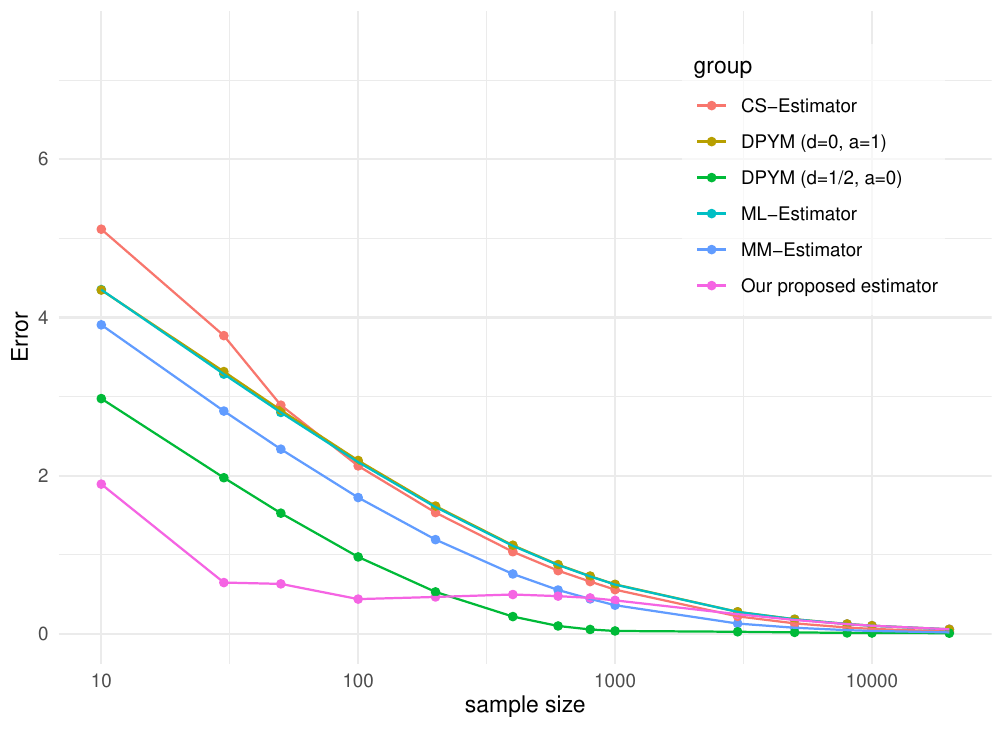}
    \end{minipage}%
    \hspace{1em}
    \begin{minipage}{0.45\textwidth}
        \centering
        \includegraphics[width=\textwidth]{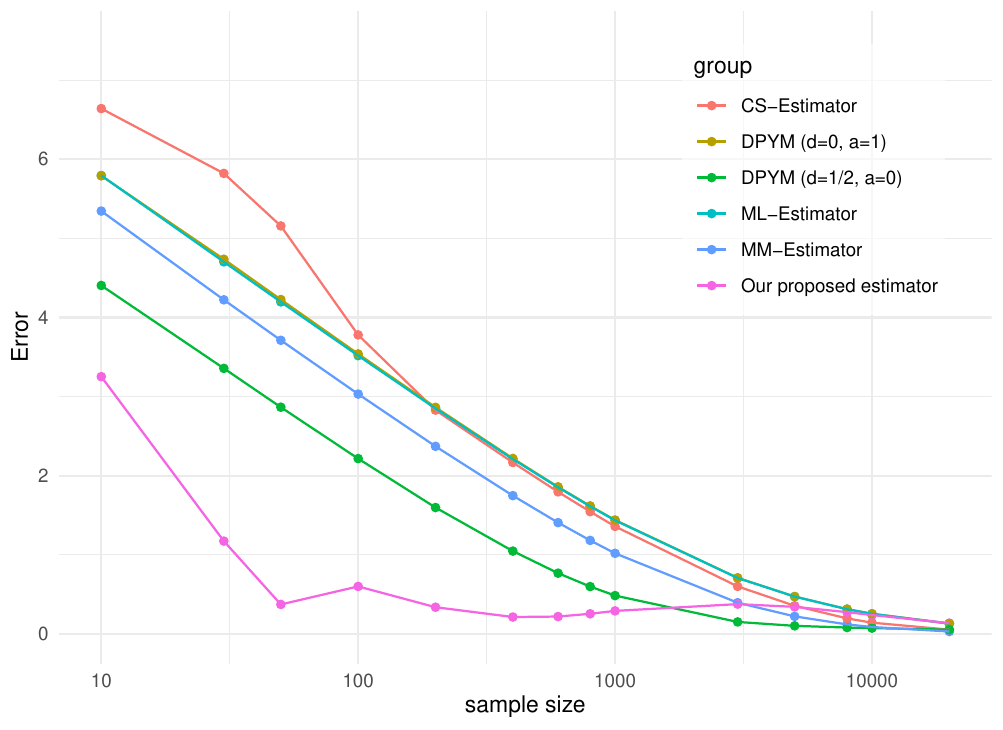}
    \end{minipage}%
    \vspace{1em}
    \begin{minipage}{0.45\textwidth}
        \centering
        \includegraphics[width=\textwidth]{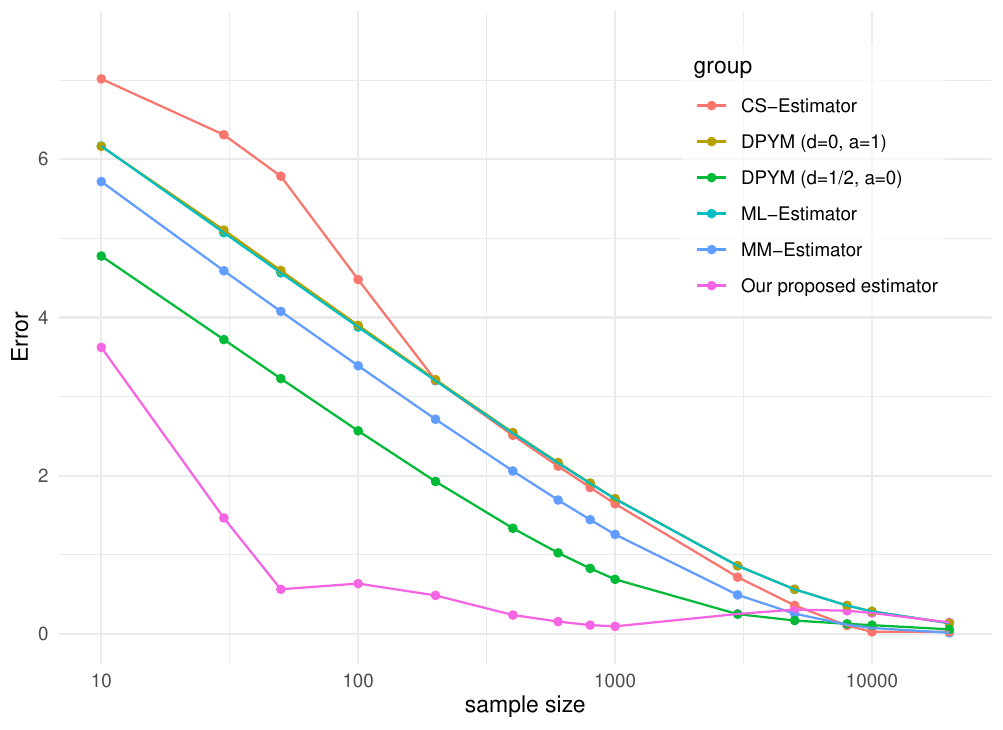}
    \end{minipage}%
    \hspace{1em}
    \begin{minipage}{0.45\textwidth}
        \centering
        \includegraphics[width=\textwidth]{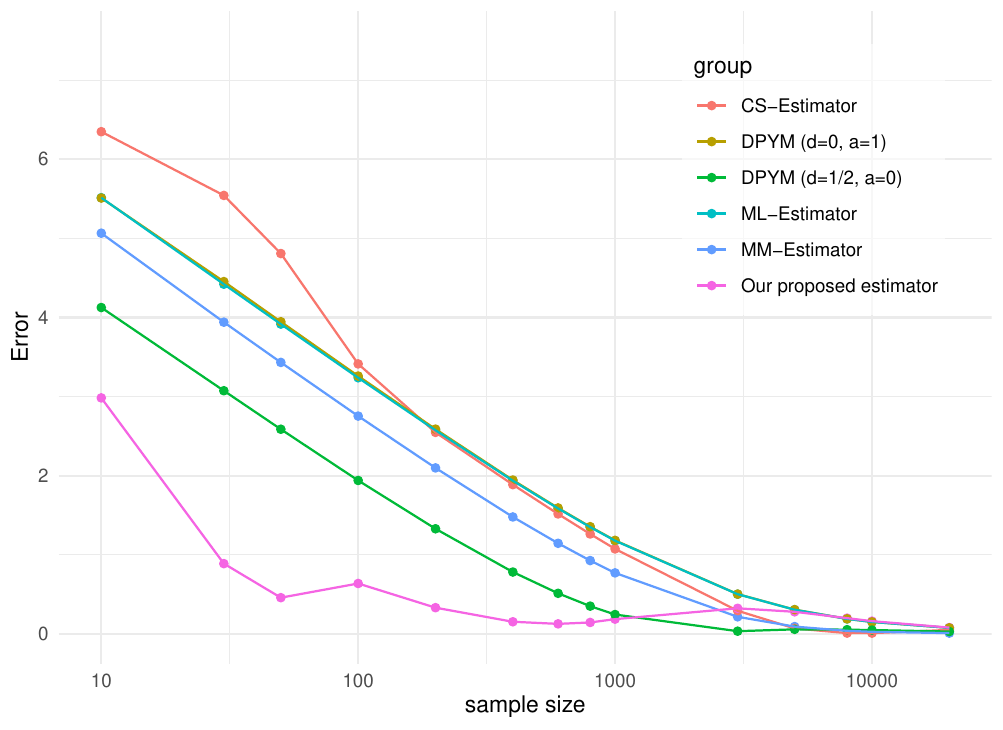}
    \end{minipage}%
    \vspace{1em}
    \begin{minipage}{0.45\textwidth}
        \centering
        \includegraphics[width=\textwidth]{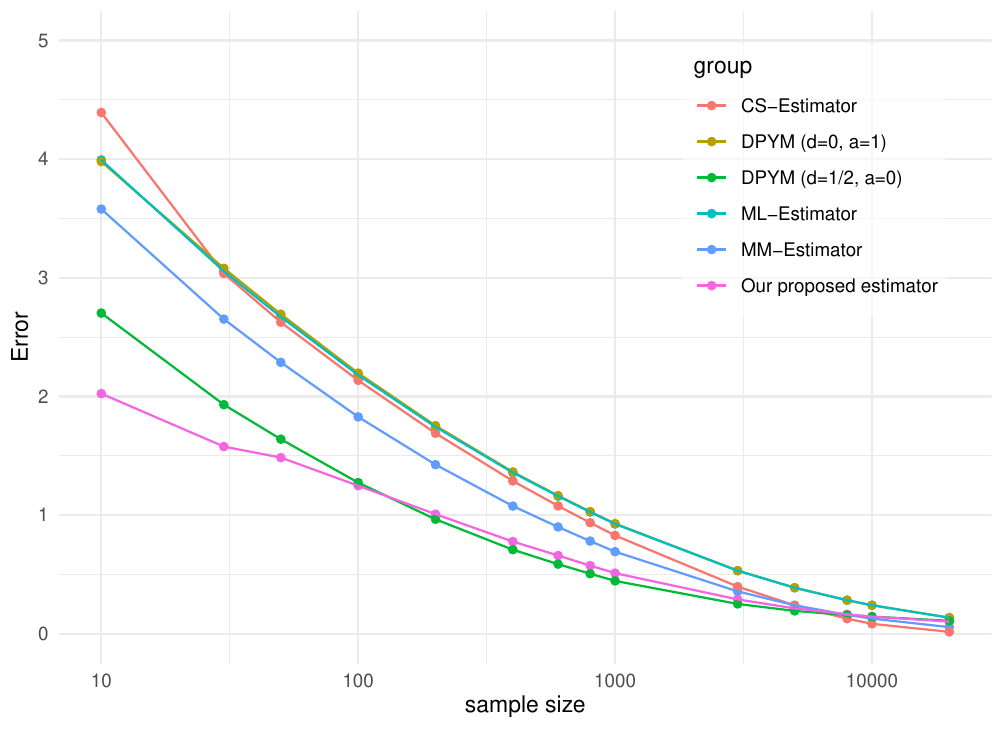}
    \end{minipage}%
    \caption{Comparing the four different entropy estimators (MLE, Miller--Madow, Chao--Shen and the proposed estimator) in five different sampling scenarios. The estimators are compared in terms of MSE of the estimated entropy. The dimension is fixed at $K = 5000$ while the sample size $N$ varies from 10 to 20000.}
    \label{fig:MSE}
\end{figure}

Figure~\ref{fig:MSE} displays the results of the simulation study, which can be summarized as follows:
\begin{itemize}
    \item When $N$ is extremely small relative to the dimension $K$ of $\bm{p}$, the proposed estimator consistently outperforms the other estimators across all scenarios.
    \item In the small-sample regime, $\mathrm{DPYM}(\mathbf{y},1/2,0)$ typically achieves the second-best performance after the proposed estimator, and attains the lowest MSE when the sample size is moderate to large.
    \item Even when $N$ is sufficiently large, the proposed estimator achieves comparable error to the other methods, demonstrating stable estimation.
    \item As the distribution becomes closer to uniform, estimation error of our proposed estimator tends to increase when $N$ is on the same order as the dimension $K$ of $\bm{p}$.
\end{itemize}

The proposed estimator performs well even with small sample sizes and thus achieves the objective of this study, namely stable and accurate estimation of the Shannon entropy in high-dimensional, small-sample settings.
For sufficiently large sample sizes, it retains accuracy comparable to existing estimators, confirming its reliability in large-sample regimes.
However, a reduction in accuracy is observed for moderate sample sizes.
This phenomenon appears to result from a misfit in parameter selection induced by the upper bound function.
Addressing this limitation constitutes an important direction for future theoretical and practical work.

\section{Application to real-world data}\label{sec:application}

In this section, we apply the proposed estimator to real-world data in order to evaluate diversity. We consider two datasets: tree species data from a tropical rainforest and word frequency data from news articles.
By comparing the results with conventional estimators such as the MLE and the Chao--Shen estimator, we highlight the advantages of our approach, particularly in situations with limited sample sizes or high underlying diversity.

\subsection{Tropical tree community data}
We use tree abundance data at a 1-hectare scale from Barro Colorado Island (BCI), Panama, to estimate entropy based on species occurrence frequencies within each plot. 
The aim of this analysis is to investigate differences in species diversity and biases in species distribution within a spatially structured forest. 

The Barro Colorado Island (BCI) dataset was provided by Condit et al.~\cite{Condit_2002} to illustrate analytical methods in the \texttt{vegan} package in \textsc{R}.
It comprises tree abundance data collected across 50 one-hectare plots in a tropical forest in Panama.
Within each plot, all trees with a diameter at breast height (DBH) of at least 10 cm were counted, and the abundances of 225 species were recorded.

The BCI dataset has been widely used for biodiversity evaluation, studies of species spatial distribution, and analysis of community structure.
In particular, it has been employed to estimate diversity indices such as the Shannon entropy and to analyze spatial patterns of species turnover (beta diversity).

Figure~\ref{fig:BCI} presents the estimated Shannon entropy for each plot, with plots on the horizontal axis and estimated values on the vertical axis. 
In the BCI dataset, species counts based on observed frequencies can be considered to roughly reflect true species richness. 
Accordingly, the proposed estimator produces values similar to those obtained with the MLE.
\begin{figure}[tbp]  
  \centering
  \includegraphics[width=0.6\textwidth]{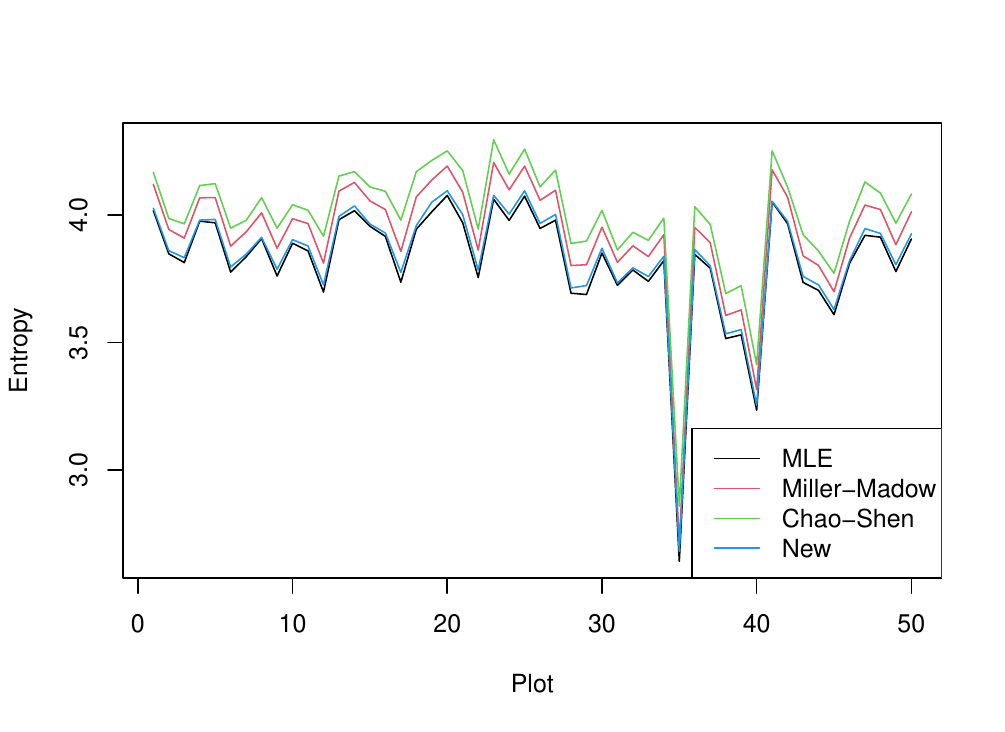} 
    \caption{Estimates of the Shannon entropy for each plot in the BCI dataset}
    \label{fig:BCI}
\end{figure}

\subsection{20 Newsgroups}

We estimate the entropy of word occurrence probabilities using the 20 Newsgroups dataset, a large-scale corpus for text classification.
In the field of natural language, the Shannon entropy is widely used to quantify the information content of words or characters.
Entropy serves as a measure of the freedom of word choice and unpredictability, and is commonly employed to analyze linguistic complexity and expressive power~\cite{Arora,Bentz_2017}.

The 20 Newsgroups dataset consists of approximately 18000 posts collected from 20 different newsgroups, with each document provided along with a topic label~\cite{Lang_1995}.
Text preprocessing was performed using the \texttt{tm} package in \textsc{R}.
Each document was loaded as a corpus, and standard preprocessing steps, such as removal of punctuation and invalid characters, were applied.
Based on the empirical distribution of word occurrences, we estimated the entropy of the word distribution for each document.

Figure~\ref{fig:News} shows the estimated entropy for each document, averaged within each category.
In this dataset, the true vocabulary size is expected to be very large.
Consistent with this expectation, the MLE and the Miller--Madow estimator tend to produce smaller entropy estimates, whereas the Chao--Shen estimator and the proposed estimator yield higher values, better capturing the underlying diversity of words.

\begin{figure}[tbp]  
  \centering
  \includegraphics[width=0.6\textwidth]{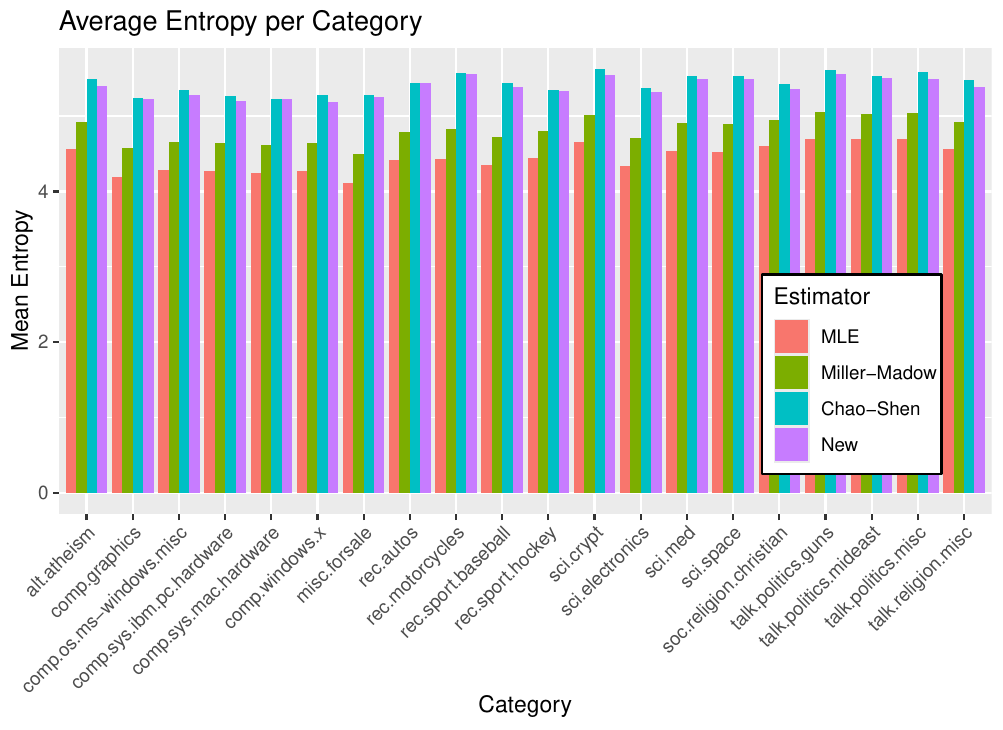} 
    \caption{Estimates of the Shannon entropy for each document in the 20 Newsgroups dataset}
    \label{fig:News}
\end{figure}

\section{Conclusion}\label{sec:conclusion}
In this study, we have proposed a method for accurately estimating the Shannon entropy of a population distribution even when the number of species is unknown and possibly very large.
The proposed estimator corrects the naive plug-in estimator by adding a correction term derived from a Pitman--Yor process.
Furthermore, we derived convergence conditions for the Shannon entropy in the general setting where discrete probability vectors follow a regularly varying function, and used these results to establish the consistency of the proposed estimator.
While the method is consistent for large sample sizes, numerical experiments demonstrated its particular usefulness in situations where the number of species exceeds the sample size, outperforming existing estimators.
Real-data applications suggest that the proposed estimator 
accommodate to different sampling regimes, exhibiting comparatively small bias correction for large samples and stronger bias correction for smaller ones. 

Several directions remain for future work.
First, a more detailed theoretical analysis under finite-sample conditions is warranted.
Although consistency was established in the large-sample regime, the convergence rate and finite-sample errors have yet to be fully characterized.
Second, while this study focused on the Shannon entropy, extensions to other information-theoretic measures, such as Rényi or Tsallis entropy, represent important avenues for further research.
Third, the correction term based on the Pitman--Yor process is flexible and could be adapted to incorporate different prior assumptions about the tail behavior or to suit specific application domains. 
Finally, potential extensions include applications to data with dependency structures or to large-scale datasets, which remain to be explored.

\section*{Acknowledgments}
The second author was supported in part by Japan Society for the Promotion of Science KAKENHI Grant Number 25K07133.


\begin{thebibliography}{99}
    \bibitem{Archer2014} Archer, E., Park, M., Pillow, J. W. (2014). ``Bayesian Entropy Estimation for Countable Discrete Distributions'' \textit{J. Mach. Learn. Res.} \textbf{15} 2833--2868
    \bibitem{Arora} Arora, A., Meister, C., Cotterell, R. (2022). ``Estimating the Entropy of Linguistic Distributions'' \textit{Proceedings of the 60th Annual Meeting of the Association for Computational Linguistics} \textbf{2}, 175--195
    \bibitem{Baccetti_2013}Baccetti, V., Visser, M. (2013). ``Infinite Shannon entropy'' \textit{J. Stat. Mech. Theory Exp.} no. 4, P04010, 12 pp.
    \bibitem{Bentz_2017}Bentz, C., Alikaniotis, D., Cysouw, M., Ferrer-i-Cancho, R. (2017). ``The Entropy of Words—Learnability and Expressivity across More than 1000 Languages''. \textit{Entropy} \textbf{19}, no.275. 
    \bibitem{Bingham_1987}Bingham, N. H., Goldie, C. M., Teugels, J.L. (1987). \textit{Regular Variation}. Cambridge University Press
    \bibitem{ChaoLee_1992} Chao, A., Lee, S. M. (1992). ``Estimating the Number of Classes via Sample Coverage'' \textit{J. Amer. Statist. Assoc.} \textbf{87}, no.417, 210--217
    \bibitem{ChaoShen_2003} Chao, A., Shen, T. J. (2003). ``Nonparametric estimation of Shannon's index of diversity when there are unseen species in sample''  \textit{Environ. Ecol. Stat.} \textbf{10}, 429--443
    \bibitem{Condit_2002} Condit, R., Pitman, N., Leigh, E. G., Chave, J., Terborgh, J., Foster, R. B., Nuñez, P., Aguilar, S., Valencia, R., Villa, G., Muller-Landau, H. C., Losos, E., Hubbell S. P. (2002).  ``Beta-diversity in tropical forest trees'' \textit{Science} \textbf{295}, 666--669.
    \bibitem{Esty_1983} Esty, W.W. (1983).
    ``A normal limit law for a nonparametric estimator of the coverage of a random sample''
    \textit{Ann. Statist.} \textbf{11}, no. 3, 905--912.
    \bibitem{Good_1953} Good, I. J. (1953). 
    ``The Population Frequencies of Species and the Estimation of Population Parameters'' 
    \textit{Biometrika} \textbf{40}, 237--264
    \bibitem{Ghosal_2017}Ghosal, S., van der Vaart A. (2017) \textit{Fundamentals of Nonparametric Bayesian Inference}, Cambridge University Press.
    \bibitem{Hausser} Hausser, J., Strimmer, K. (2009). 
    ``Entropy Inference and the James-Stein Estimator, with Application to Nonlinear Gene Association Networks''
    \textit{J. Mach. Learn. Res.} \textbf{10}, 1469--1484
    \bibitem{Jiao_2015} Jiao, J., Venkat, K., Han, Y., Weissman, T. (2015) ``Minimax Estimation of Functionals of Discrete Distributions,'' \textit{IEEE Trans. Inform. Theory} \textbf{61} no. 5, 2835--2885
    \bibitem{Jonson_2005} Johnson, N. L., Kotz, S., Kemp, A. W. (2005) \textit{Univariate Discrete Distributions Third Edition} Wiley Series in Probability and Statistics.  John Wiley \& Sons, Ltd.
    \bibitem{Lang_1995} Lang, K. (1995). ``NewsWeeder: Learning to Filter Netnews''. \textit{ Machine learning proceedings 1995}. 331--339.
    \bibitem{Leinster} Leinster, T. (2021).
    \textit{Entropy and Diversity. the Axiomatic Approach}
    Cambridge University Press, Cambridge.
    \bibitem{McAllester_2000} McAllester, D., Schapire, R. E. (2000). ``On the Convergence Rate of Good-Turing Estimators'' \textit{COLT} 1--6
    \bibitem{Miller_1955} Miller, G. A. (1955) ``Note on the bias of information estimates'' In H. Quastler, editor, \textit{Information Theory in Psychology II-B}, 95--100. Free Press, Glencoe, IL.
    \bibitem{Nair_2022}Nair, J., Wierman, A., Zwart, B. (2022). \textit{The Fundamentals of Heavy Tails Properties, Emergence, and Estimation}, Cambridge University Press.
    \bibitem{Paninski_2004} Paninski, L. (2004). ``Estimating Entropy on $m$ Bins Given Fewer Than   $m$ Samples'' \textit{IEEE Trans. Inform. Theory} \textbf{50}, no. 9, 2200--2203
    \bibitem{Patil_1982} Patil, G.P., Taillie, C. (1982). ``Diversity as a concept and its measurement'' \textit{J. Amer. Statist. Assoc.} \textbf{77}, no.379, 548--561.
    \bibitem{Pitman_1997} Pitman, J., Yor, M. (1997). ``The Two-Parameter Poisson-Dirichlet Distribution Derived from a Stable Subordinator''  \textit{Ann. Probab.} \textbf{25}, no. 2, 855--900
    \bibitem{Resnick_2007} Resnick, S. I. (2007).
    \textit{Heavy-Tail Phenomena: Probabilistic and Statistical Modeling}, Springer New York.
    \bibitem{Yihong_2016} Wu, Y., Yang, P. (2016).
    ``Minimax Rates of Entropy Estimation on Large Alphabets via Best Polynomial Approximation'' \textit{IEEE Trans. Inform. Theory} \textbf{62}, no. 6, 3702--3720
    \bibitem{Zhang_2009} Zhang, C.-H., Zhang, Z. (2009).
    ``Asymptotic normality of a nonparametric estimator of sample coverage''
    \textit{Ann. Statist.} \textbf{37}, no. 5A, 2582--2595.
\end{thebibliography}
\end{document}